\documentclass[twocolumn]{autart}
\usepackage{amsfonts,amsmath,amssymb,bm,mathtools}  
\usepackage{graphicx,subcaption,epstopdf,pdfpages}  
\usepackage{enumitem}
\usepackage{comment,soul,xcolor}                    
\allowdisplaybreaks

\newtheorem{theorem}{Theorem}
\newtheorem{algorit}[thm]{Algorithm} 
\newtheorem{lemma}[thm]{Lemma}
\newtheorem{proposition}[thm]{Proposition}

\theoremstyle{definition}
\newtheorem{definition}[thm]{Definition}
\newtheorem{example}[thm]{Example}
\newtheorem{remark}[thm]{Remark}

\newcommand{\q}{q^{-1}}
\newcommand{\z}{z^{-1}}
\newcommand{\en}{\hat\eta_N}
\newcommand{\tn}{\hat\theta_N}
\newcommand{\vn}{\hat\vartheta_N}
\newcommand{\zn}{\hat\zeta_N}
\newcommand{\bn}{\hat\beta_N}
\newcommand{\E}{\mathbb{E}}
\newcommand{\Eb}{\bar{\E}}
\newcommand{\Four}{\mathfrak{F}}

\newcommand{\rank}{\text{rank}}
\newcommand{\eio}{e^{i\omega}}

\begin{document}

\begin{frontmatter}
    \title{Modelling and identification of diffusively coupled linear networks with additional directed links\thanksref{funding}} 
    \thanks[funding]{Funded by the European Union. Views and opinions expressed are however those of the author(s) only and do not necessarily reflect those of the European Union or the European Research Council. Neither the European Union nor the granting authority can be held responsible for them. This paper was not presented at any IFAC meeting. }

    \author[TUe,DIFFER]{E.M.M. (Lizan) Kivits\corauthref{cor}}\ead{e.m.m.kivits@differ.nl},
    \author[TUe]{Paul M.J. Van den Hof}\ead{p.m.j.vandenhof@tue.nl},
    \corauth[cor]{Corresponding author: E.M.M. (Lizan) Kivits. }

    \address[TUe]{Department of Electrical Engineering, Eindhoven University of Technology, 5600 MB Eindhoven, the Netherlands}
    \address[DIFFER]{DIFFER - Dutch Institute for Fundamental Energy Research, De Zaale 20, 5612 AJ Eindhoven, the Netherlands\thanksref{nwo}}
    \thanks[nwo]{DIFFER is part of the institutes organization of NWO. }
    
    \begin{keyword} 
        Modelling; System identification; Identification methods; Dynamic networks; Physical systems; Diffusive couplings; Identifiability
    \end{keyword} 

    \begin{abstract} 
    Dynamic networks consist of interconnected dynamical systems. 
    The subsystems can be viewed as transformations of input signals into output signals, where signals flow from one system into another through interconnections. The signal flows represent directions of information flow, thus a dynamic network can be visualised by a directed graph. 
    In contrast, natural and physical laws only impose relations between systems variables, while variables are shared among systems via interconnections. Sharing is independent of direction, and therefore a dynamic network originating from physics can be visualised by an undirected graph. 
    Typically, dynamic networks are considered to have either directed or undirected interconnections. For both situations, network models, analytic tools, and identification algorithms have been developed.
    However, dynamic networks can also have both directed and undirected interconnections, for example, in physical networks equipped with digital controllers.
    In this work, we present mixed linear dynamic networks that contain both undirected and directed interconnections, where the nature of the interconnecting dynamics needs to be incorporated into the modelling framework, identifiability analysis, and identification procedure. For these mixed networks, we derive dynamic network models; formulate conditions for consistent identification of all dynamics in the network; and develop a tractable identification algorithm that delivers consistent estimates.
    \end{abstract}

\end{frontmatter}

\section{Introduction} \label{section:introduction}
Dynamic networks have been studied in a variety of research domains including social science, finance, biology, and engineering \cite{BOCCALETTI2006,MESBAHI2010,REN2005}. In the past decade, modelling and identification of dynamic networks has received significant attention in the literature. From an identification point of view, a dynamic network is a natural successor of the classical open-loop and closed-loop systems, while more complex interconnection structures are considered. The large-scale systems that are modelled with block diagrams in Matlab Simulink \cite{Simulink} are examples of dynamic networks. This viewpoint led to a modelling framework for dynamic networks in which transfer functions (modules) are interconnected to describe the dependencies among signals \cite{GONCALVES2008,VANDENHOF2013}. Dynamic networked systems can also be described by interconnected state-space systems \cite{VERHAEGEN2012,ZHOU2022}. Because of the input-output structure of the modules, such dynamic networks can graphically be depicted by a directed graph, in which the signals are represented by the nodes and where the edges capture the modules. For these directed dynamic networks, a prediction error identification framework has been developed \cite{VANDENHOF2013}. A number of identification topics has been addressed, including identification of the full network \cite{GONCALVES2008,MATERASSI2020,WEERTS2018_AUTO2} and identifiability of the network dynamics under various assumptions on the measured and excited signals \cite{CHENG2022,HENDRICKX2019,LEGAT2020,SHI2023,VANWAARDE2020,WEERTS2018_AUTO1}, even for networks including fixed dynamics \cite{DREEF2022} or nonlinear functions \cite{VIZUETE2023}. 

Physical processes have a wide range of applications. Think, for example, of electrical circuits, chemical reactions, and mechanical, hydraulic, and biological systems \cite{LJUNG1994}. Physical laws can often be expressed by ordinary differential equations, which are therefore used in many models for physical systems, such as state-space models \cite{BULLO2022,HABER2014,MESBAHI2010,YU2020}, second-order vector differential equation models \cite{LJUNG1994}, and polynomial models \cite{LJUNG1999,LJUNG1994}. A physical network consists of interconnected physical components, which often imply couplings based on the difference of variables and are characterised by symmetric behaviour. Interconnections that depend on the difference of the connected variables are referred to as diffusive couplings \cite{JONES1985}. For example, the behaviour of a linear resistor is symmetric in the sense that the current flowing through the resistor depends \emph{equally} on the electric potentials on each side of the resistor. Moreover, it depends on the \emph{difference} between these electric potentials, implying a diffusive coupling. The resistor thus implies a symmetric cause-effect relationship instead of an input-output structure. Consequently, diffusively coupled networks can graphically be represented by undirected graphs, akin the schematics in Matlab Simscape \cite{Simscape} that are used to model physical processes. 

For identification purposes, diffusively coupled networks can be modelled in several frameworks. The modelling frameworks of directed networks is less attractive for including symmetric diffusive couplings, because their structural properties are more difficult to incorporate into the modelling and identification procedure and cannot be accounted for in the analysis. State-space forms \cite{LOPES2015,LUS2003} and transfer function models or polynomial models \cite{LJUNG1999} can be used to estimate the model parameters of diffusively coupled networks, but in general the interconnection structure in the network is lost in the model. To preserve the internal interconnection structure and to include the characteristic symmetric property of the dynamics in the model, diffusively coupled networks can be modelled in a particular polynomial framework \cite{KIVITS2023_TAC}. This model is the basis for prediction error identification methods that are able to incorporate the characteristics into the identification procedure for identifying the full dynamics and topology of the network \cite{KIVITS2023_TAC,LIANG2025} or for identifying a particular subnetwork \cite{KIVITS2022}. However, the consequences of including directed interconnections in this framework are not clear yet. 

Due to the symmetric cause-effect relationships in the interactions, physical networks can be represented by undirected graphs. On the other hand, digital controllers explicitly describe input-output relationships and therefore, can only be represented by directed graphs. The same holds for nonsymmetric physical components, such as diodes and check valves. Physical processes that contain nonsymmetric physical components or that are equipped with digital controllers contain a mix of directed and undirected (diffusive) connections. Dynamic networks containing both undirected and directed interconnections are referred to as \emph{mixed networks} \cite{KIVITS2024t}. The objective of this research is to identify the dynamics of mixed linear dynamic networks. This involves determining an attractive modelling framework and selecting the signals that need to be measured and\slash or excited to identify all or some of the network dynamics. In addition, the consistency and minimum variance properties of the estimates have to be specified and algorithms for performing the identification have to be developed and implemented. These objectives can be achieved by extending the theory for modelling and identification of diffusively coupled networks to include directed dynamics.

This paper includes the modelling frameworks and the theoretical identification results that lead towards achieving the above-mentioned objectives. Section~\ref{section:network} contains an introduction to mixed networks. In Section~\ref{section:modelling}, the network model of the particular class of mixed networks that will be analysed is formalised. The identification problem is defined in Section~\ref{section:identification}. Conditions for a unique representation (i.e. network identifiability) are presented in Section~\ref{section:identifiability}, while conditions for data informativity and consistent identification are presented in Section~\ref{section:consistency}. In Section~\ref{section:algorithm}, the identification algorithm is discussed. Further insights and extensions are discussed in Section~\ref{section:discussion}, after which Section~\ref{section:conclusion} concludes the paper.

Consider the following notation throughout the paper: 
Let $\mathbb{R}^{p,m}[\z]$ be the ring of matrix polynomials of size $p \times m$ in complex indeterminate $\z$. A polynomial matrix $A(\z)\in\mathbb{R}^{p,m}[\z]$ consists of matrices $A_{\ell}$ and $(j,k)$th polynomial elements $a_{jk}(\z)$ such that $A(\z) = \sum_{\ell=0}^{n_a} A_{\ell} z^{-\ell}$ and $a_{jk}(\z) = \sum_{\ell=0}^{n_a} a_{jk,\ell} z^{-\ell}$. Hence, $\lim_{z\rightarrow\infty} A(z) = A_0$. 
Let $\mathbb{R}^{p\times m}(z)$ be the field of rational matrix functions of size $p \times m$. A rational function matrix $F(z)\in\mathbb{R}^{p,m}(z)$ consists of elements $F_{jk}(z)$ and is proper if $F^{\infty} := \lim_{z\rightarrow\infty} F(z) = c \in \mathbb{R}^{p\times m}$; it is strictly proper if $F^{\infty} = 0$, and monic if $p=m$ and $F^{\infty}=I$, the identity matrix. $F(z)$ is stable if all its poles are within the unit circle $|z| < 1$. A matrix is called \emph{hollow} if all elements on the diagonal are zero. 
$gcd(x,Y)$ is the greatest common divisor of scalar $x$ and all scalar elements of matrix $Y$.

\section{Controlled physical networks} \label{section:network}
In this paper, a linear dynamic network is considered to describe the dynamical relations between internal signals, referred to as node signals. The node signals can be affected by measured external excitation signals and unmeasured disturbance signals. 

Physical networks are interconnections of physical components, which can express physical processes from various domains. Common examples include electrical resistor-inductor-capacitor circuits and mechanical mass-spring-damper systems. Physical networks typically contain a connection to a reference node or ground, which is, for example, an electric component attached to the earth or a mechanical component secured to the wall. The ground can be seen as a node with node signal $w_0(t)=0$. The symmetric bi-directional nature of physical components leads to symmetric cause-effect relationships. This characteristic property is captured by diffusive couplings \cite{JONES1985} and plays a key role in the model. 

\begin{figure}
    \centering
    \includegraphics[page=2,width=\linewidth]{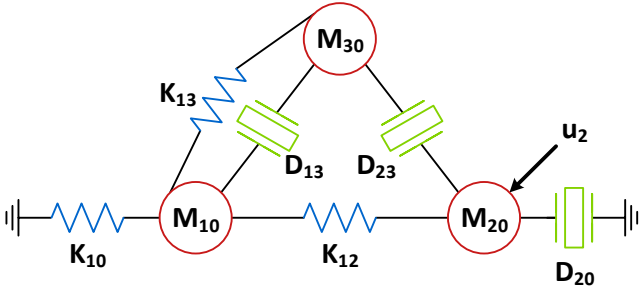}
    \caption{A network of masses ($M_{j0}$) interconnected through dampers ($D_{jk}$) and springs ($K_{jk}$), with input ($u_2$) and a ground (triangle of three lines) \cite{KIVITS2023_TAC}.}
    \label{fig:msd}
\end{figure}

An example of a physical network is the mass-spring-damper system shown in Fig.~\ref{fig:msd} \cite{KIVITS2023_TAC}. It consists of three masses $M_{j0}$, $j=1,2,3$, that are being interconnected with dampers $D_{jk}$ and springs $K_{jk}$, with $k\neq0$, and that are being attached to the ground (or wall) with dampers $D_{j0}$ and springs $K_{j0}$. The couplings between the masses are diffusive, because springs and dampers are symmetric components. The positions $w_j(t)$ of the masses $M_{j0}$ are the internal signals that are considered as the node signals\footnote{A system as the one shown in Fig.~\ref{fig:msd} would require at least a two-dimensional position vector $w_j(t)$, but for notational convenience and without loss of generality, we will restrict our attention to scalar-valued node
signals $w_j(t)$.}. The behaviour of each node signal is described according to the second-order differential equation
\begin{multline}\label{eq:msd_node}
    M_{j0} \ddot{w}_j(t) + D_{j0} \dot{w}_j(t) + \sum_{k\in\mathcal{N}_j} D_{jk} \left[ \dot{w}_j(t) - \dot{w}_k(t) \right] \\
    + K_{j0} w_j(t) + \sum_{k\in\mathcal{N}_j} K_{jk} \left[ w_j(t) - w_k(t) \right] = u_j(t),
\end{multline}
with $\mathcal{N}_j$ the set of indices of node signals $w_k(t)$, $k\neq j$, with connections to node signals $w_j(t)$; external signals $u_j(t)$; and $\dot{w}_j(t)$ and $\ddot{w}_j(t)$ the first- and second-order derivative of $w_j(t)$, respectively. The overall behaviour of this system is captured by
\begin{equation} \label{eq:msd_diffusive}
    M_d \ddot{w}(t) + (D_d + D_{\mathcal{L}}) \dot{w}(t) + (K_d + K_{\mathcal{L}}) w(t) = u(t),
\end{equation}
with $w(t)$ and $u(t)$ being vectorised versions of $w_j(t)$ and $u_j(t)$, respectively; and with $M_d$, $D_x$, and $K_x$, $x\in\{d,\mathcal{L}\}$ containing the masses, the damping coefficients, and the spring coefficients, respectively, as follows: Diagonal matrices $M_d$, $D_d$, and $K_d$ are having $(j,j)$th elements $M_{j0}$, $D_{j0}$, and $K_{j0}$, respectively; and Laplacian\footnote{A Laplacian matrix is a symmetric matrix with nonpositive off-diagonal elements and with nonnegative diagonal elements that are equal to the negative sum of all other elements in the same row (or column) \cite{MESBAHI2010}.} matrices $D_{\mathcal{L}}$ and $K_{\mathcal{L}}$ are having off-diagonal $(j,k)$th elements $[D_{\mathcal{L}}]_{jk} =D_{jk}$ and $[K_{\mathcal{L}}]_{jk} =K_{jk}$, $j\neq k$, respectively, and diagonal $(j,j)$th elements $[D_{\mathcal{L}}]_{jj} = -\sum_{k\neq j}D_{jk}$ and $[K_{\mathcal{L}}]_{jj} = -\sum_{k\neq j}K_{jk}$, respectively. The diffusive type of coupling induces the symmetry constraints $D_{jk} = D_{kj}$ and $K_{jk} = K_{kj}$ $\forall j,k$.

Let a digital controller be added to the mass-spring-damper system between masses $M_{20}$ and $M_{10}$, e.g. taking the position of mass $M_{20}$ as input to steer the position of mass $M_{10}$. The resulting controlled physical network is shown in Fig.~\ref{fig:msdc} and is described by
\begin{multline} \label{eq:msd_mixed}
    M_d \ddot{w}(t) + (D_d + D_{\mathcal{L}}) \dot{w}(t) + (K_d + K_{\mathcal{L}}) w(t) \\= u(t) + G w(t),
\end{multline}
with $G$ a matrix containing the proportional gain controller $G_{12}$ as its $(1,2)$th element and with zeros at the other elements. If $G_{12}$ is known, it acts like input dynamics on the network, even though its input signal is coming from inside the network. However, if the dynamics of this controller is unknown, the model has to be adjusted to account for the directed dynamics.  

\begin{figure}
    \centering
    \includegraphics[page=3,width=\linewidth]{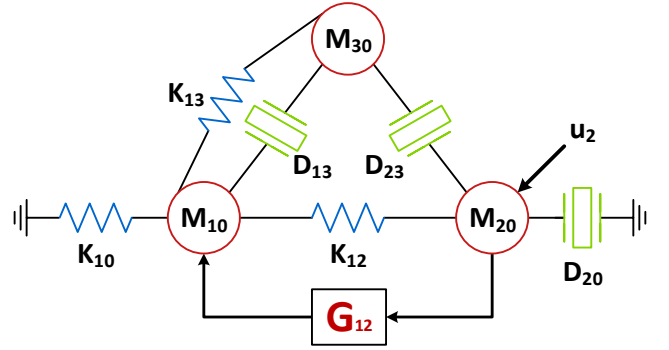}
    \caption{The mass-spring-damper network of Fig.~\ref{fig:msd} with an additional digital controller ($G_{12}$).}
    \label{fig:msdc}
\end{figure}

\section{Network models} \label{section:modelling}

\subsection{Diffusively coupled network model} \label{subsection:diffusive}
As mentioned, the symmetric bi-directional nature of physical components is captured by diffusive couplings. For modelling diffusively coupled networks, we adopt the polynomial framework defined in \cite{KIVITS2023_TAC}. To fully exploit the corresponding results on network identification, we consider the network model in discrete-time. In \cite{KIVITS2023_TAC}, the discrete-time model is derived from the continuous-time model by using the backward difference discretisation method. As a result, the continuous-time and discrete-time model of diffusively coupled networks have the same structural properties in capturing the diffusive couplings. 
\begin{definition}[Diffusively coupled network model] \label{def:diffusive}
  A diffusively coupled network consists of $L$ node signals $w_j(t)$, $j=1,2,\ldots,L$, and $K$ excitation signals $r_k(t)$, $k=1,2,\ldots,K$, and is defined as
  \begin{equation}\label{eq:diffusive}
    A(\q) w(t) =  B(\q) r(t) + F(q) e(t),
  \end{equation}
  with $q^{-1}$ the shift operator meaning $q^{-1}w(t) = w(t-1)$ and with
  \begin{enumerate}
    \item $A(\q)\in \mathcal{A}:=\{A\in\mathbb{R}^{L\times L}[\q]\ |\ a_{jk} = a_{kj}\linebreak \forall k,j\ \mbox{and}\ A^{-1}\ \mbox{stable} \}$.
    \item $B(\q)\in \mathcal{B}:=\{B\in\mathbb{R}^{L\times K}[\q]\}$.
    \item $F(q) \in \mathcal{F}:=\{F\in\mathbb{R}^{L\times L}(q)\ |\ F\ \mbox{monic, stable, and}\linebreak \mbox{stably invertible} \}$.
    \item $\Lambda\succ0$ the covariance matrix of the noise $e(t)$.
    \item $w(t)\in\mathbb{R}^{L\times 1}$ a column vector collecting all node signals $w_j(t)$, $j=1,2,\ldots,L$.
    \item $r(t)\in\mathbb{R}^{K\times 1}$ a deterministic and bounded sequence.
    \item $e(t)\in\mathbb{R}^{L\times 1}$ a zero-mean white noise process with bounded moments of an order larger than 4 \cite{LJUNG1999}.
  \end{enumerate}
  This network is assumed to be connected with at least one connection to the ground\footnote{If the network is connected (each pair of nodes has a path connecting them), it has at least one connection to the ground if and only if $\rank( A(\z) ) = L$ \cite{DORFLER2013,KIVITS2023_TAC}.}.
\end{definition}

The internal dynamics of the network are described by polynomial matrix $A(\q)$. The symmetry of the diffusive couplings appear in \eqref{eq:diffusive} via the symmetry of $A(\q)$. Note that $A(\q)$ is also nonmonic, which is in contrast to the standard ARMAX model structure \cite{HANNAN2012}. To interpret the network model \eqref{eq:diffusive} in terms of physical components, observe that $A(\q)$ can uniquely be separated as $A(\q)=X(\q)+Y(\q)$, where $X(\q)$ is diagonal and $Y(\q)$ is Laplacian. The diagonal elements of $X(\q)$ capture the dynamics in the connections of the nodes $w(t)$ to the ground. The off-diagonal elements of $Y(\q)$ capture the dynamics in the interconnections between the nodes $w(t)$. The diffusive character of the components appears in $Y(\q)$ as the diagonal element in each row being equal to the negative sum of all other elements in that row, that is $y_{jj}(\q)=-\sum_{k\neq j}y_{jk}(\q)$. The symmetry of $Y(\q)$ represents the symmetry properties of the physical components. The assumption on connectivity induces well-posedness of the network. 

\begin{example}[Diffusively coupled network model] \label{ex:msd_diffusive}
    The (continuous-time) model \eqref{eq:msd_diffusive} of the mass-spring-damper system described in Section~\ref{section:network} has the same structure as the (discrete-time) diffusively coupled network model in \eqref{eq:diffusive} with $u(t):=B_0 r(t) + v(t)$ and with symmetric matrices $A_2=M_d$, $A_1=D_d+D_{\mathcal{L}}$, $A_0=K_d+K_{\mathcal{L}}$. The symmetric matrices $A_{\ell}$, $\ell=0,1,2$, can be separated into diagonal matrices $X_2=M_d$, $X_1=D_d$, and $X_0=K_d$, and Laplacian matrices $Y_1=D_{\mathcal{L}}$, and $Y_0=K_{\mathcal{L}}$. 
\end{example}

\subsection{Mixed network model} \label{subsection:mixed}
Although most physical components are symmetric, some components exhibit nonsymmetric or even directed behaviour. Components like diodes and check valves introduce directionality due to their one-way conduction. The same holds for input-output systems, such as digital controllers. Physical processes that contain nonsymmetric physical components or that are equipped with digital controllers contain a mix of directed and undirected (diffusive) connections. The diffusively coupled network model as defined in Definition~\ref{def:diffusive} cannot include components with diffusive nonsymmetric behaviour, components with nondiffusive behaviour, and input-output systems, because these directed systems ruin the symmetric structure of the diffusively coupled network model. Dynamic networks in which undirected and directed interconnections coexists are referred to as mixed networks. 
\begin{definition}[Mixed network] \label{def:mixednetwork}
  A \emph{mixed network} is a dynamic network that contains both undirected and directed connections between the nodes.
\end{definition}

In this paper, we consider the particular class of mixed networks that consists of undirected diffusively coupled networks with additional directed dynamic links between the nodes. The most natural origin of the directed links is a digital controller, while they can also originate from nonsymmetric or directed physical components, such a diodes and check valves. An example of such a mixed network is a physical network with additional digital controllers or nonsymmetric components. Fig.~\ref{fig:class} shows a graph of a mixed network, where a directed dynamics from node $w_2$ to node $w_1$ (in red) is added to the undirected network (in blue). 

\begin{remark}[Extension] \label{rem:extension}
  An extension of the considered class of mixed networks concerns multiple distinct sets of nodes that may contain additional directed dynamics and that are being interconnected through directed dynamics \cite{KIVITS2024t}. 
\end{remark}

\begin{figure}
  \centering
  \includegraphics[trim={0 3cm 0 0},clip,width=\linewidth]{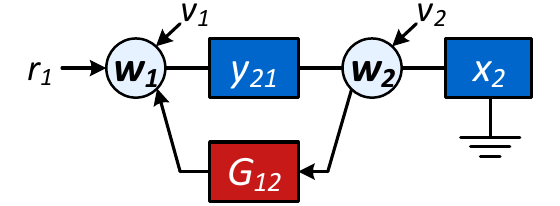}
  \caption{A mixed network with nodes $w_1$ and $w_2$ (blue circles); excitation $r_1$; disturbances $v_1$ and $v_2$; diffusive couplings (lines) with undirected dynamics (blue blocks); directed links (arrows) with input-output dynamics (red block); and a ground (triangle of three lines). }
  \label{fig:class}
\end{figure}

Including the directed dynamics in the diffusively coupled network model \eqref{eq:diffusive} leads to the mixed network model.
\begin{definition}[Mixed network model] \label{def:original}
  A mixed network model consists of $L$ node signals $w_j(t)$, $j=1,2,\ldots,L$, and $K$ excitation signals $r_k(t)$, $k=1,2,\ldots,K$, and is written as
  \begin{equation}\label{eq:model_original}
    A(\q) w(t) = B(\q) r(t) + G(q) w(t) + F(q) e(t),
  \end{equation}
  with $A(\q)\in\mathcal{A}$, $B(\q)\in\mathcal{B}$, $G(q)\in\mathcal{G}:=\{G\in\mathbb{R}^{L\times L}(q)\ |\ G_{jk}\ \mbox{proper},  G_{jj}=0,\ \forall j,k\}$, $F(q)\in\mathcal{F}$, and with $\Lambda$, $w(t)$, $r(t)$, and $e(t)$ as in Definition~\ref{def:diffusive}. This network is assumed to be well-posed and stable. 
\end{definition}

The mixed network is well-posed and stable if $(A(\q)-G(q))^{-1}$ exists and is proper and stable \cite{DANKERS2014}. The dynamics in the directed links are modelled by a rational matrix $G(q)\in \mathcal{G}$. Each element in $G(q)$ corresponds to an input-output system in the interconnection between two node signals. The zero structure of $G(q)$ represents the interconnection structure of the directed part of the network. Self-loops are excluded as, without loss of generality, $G(q)$ is hollow.

\begin{example}[Mixed network model] \label{ex:msd_mixed}
    The (continuous-time) model \eqref{eq:msd_mixed} of the controlled mass-spring-damper system described in Section~\ref{section:network} has the same structure as the (discrete-time) mixed network model in \eqref{eq:model_original} with $u(t):=B_0 r(t) + v(t)$ and with matrices $A_2=M_d$, $A_1=D_d+D_{\mathcal{L}}$, $A_0=K_d+K_{\mathcal{L}}$, and $G_0=G$. 
\end{example}

While the matrices $A(\q), B(\q)$ and $G(q)$ in \eqref{eq:model_original} are a mix of polynomial and rational terms, for algorithmic purposes it is attractive to rewrite the network equation into a form where the terms that process $w(t)$ and $r(t)$ all become polynomial. This is formulated next, where the transfer function matrix $G(q)$ is made polynomial by uniquely factorising it as
\begin{equation}\label{eq:model_G}
  G(q) = D_G^{-1}(\q) N_G(\q),
\end{equation}
with $D_G(\q)$ and $N_G(\q)$ being left coprime; $D_G(\q)$ diagonal, full rank, and monic; and the notational rule that $D_{G_{ii}}(\q) := 1$ if $G_{ij}(q) = 0$ for all $j$. 
\begin{proposition}[Polynomial mixed network model] \label{prop:poly}
  Any mixed network defined by Definition~\ref{def:original} can be written as
  \begin{equation}\label{eq:model_poly}
    \breve{\Upsilon}(\q) w(t) = \breve{B}(\q) r(t) + \breve{F}(q) e(t),
  \end{equation}
  with $\breve{\Upsilon}(\q) := \breve{A}(\q) - \breve{G}(\q)$ with polynomial elements $\breve{\Upsilon}_{ij}(\q) =: \breve{\upsilon}_{ij}(\q)$ and with
  \begin{subequations}\label{eq:model_breve}\begin{align}
    \breve{A}(\q) &= d_G(\q) A(\q),                   \label{eq:model_breveA}\\
    \breve{B}(\q) &= d_G(\q) B(\q),                   \label{eq:model_breveB}\\
    \breve{F}( q) &= d_G(\q) F(q),                    \label{eq:model_breveF}\\
    \breve{G}(\q) &= d_G(\q) D_G^{-1}(\q) N_G(\q),    \label{eq:model_breveG}\\
    d_G(\q) &:= \frac{det(D_G(\q))}{gcd(det(D_G(\q)),adj(D_G(\q)))},
  \end{align}\end{subequations}
   with $D_G(\q)$ and $N_G(\q)$ satisfying \eqref{eq:model_G}. Observe that $d_G(\q)$ is a monic and scalar polynomial. 
\end{proposition}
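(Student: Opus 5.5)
The plan is to premultiply the network equation \eqref{eq:model_original} by the scalar polynomial $d_G(\q)$ and show that every resulting term has the form claimed in \eqref{eq:model_poly}. Rearranging \eqref{eq:model_original} gives $(A(\q)-G(q))w(t) = B(\q)r(t) + F(q)e(t)$. Substituting the factorisation \eqref{eq:model_G} and multiplying from the left by $d_G(\q)$, which commutes with every matrix because it is scalar, yields \eqref{eq:model_poly} with the definitions \eqref{eq:model_breve}. Multiplication by a nonzero scalar polynomial operator is invertible on the signal space: $d_G$ is monic, hence nonzero, and under the stability and well-posedness assumptions we may simply regard it as a filter applied to both sides. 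Therefore the two representations describe the same $w(t)$, and the rewriting is an equivalent description of the network.

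The substantive claim is that $\breve{\Upsilon} = \breve{A}-\breve{G}$ and $\breve{B}$ are polynomial. Polynomiality of $\breve A = d_G A$ and $\breve B = d_G B$ is immediate, since products of polynomials are polynomial. The key step is $\breve G$. Write $D_G^{-1} = \operatorname{adj}(D_G)/\det(D_G)$ and let $g := gcd(\det(D_G),\operatorname{adj}(D_G))$. Then
\begin{equation*}
d_G D_G^{-1} = \frac{\det(D_G)}{g}\cdot\frac{\operatorname{adj}(D_G)}{\det(D_G)} = \frac{\operatorname{adj}(D_G)}{g}.
\end{equation*}
Because $g$ divides every entry of $\operatorname{adj}(D_G)$ by definition of the gcd, this matrix is polynomial. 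Hence $\breve G = (\operatorname{adj}(D_G)/g)\,N_G$ is a product of polynomial matrices and is polynomial. Since $D_G$ is diagonal, one can also see this concretely: $\operatorname{adj}(D_G)$ is diagonal with entries $\prod_{k\neq i} D_{G_{kk}}$, and $d_G$ turns out to be the least common multiple of the diagonal entries. It is not necessary to prove this for the proposition, though it gives intuition.

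Finally, I verify the remaining properties. First, $\det(D_G)$ is monic because $D_G$ is diagonal and monic, so its constant coefficient in $\z$ equals $1$. Second, $g$ divides $\det(D_G)$ and can be normalised to be monic. The quotient of two monic polynomials (in the sense of leading term $1$ at $z^{-0}$) is therefore monic, which gives the observation that $d_G$ is a monic scalar polynomial. The matrix $\breve F = d_G F$ is rational and keeps its stability, because $d_G$ is a polynomial in $\z$ with poles only at the origin. I expect no real obstacle. The only delicate point is the bookkeeping around the gcd normalisation and monicity: one must make sure that $g$ is chosen monic so that $d_G$ is monic. The convention $D_{G_{ii}}:=1$ for zero rows of $G$ ensures that such rows contribute nothing to $d_G$.
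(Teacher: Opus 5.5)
Your proposal is correct and follows the paper's own proof, which consists of the same three moves: subtract $G(q)w(t)$, substitute $G=D_G^{-1}N_G$, and premultiply by $d_G(\q)$. Your identity $d_G D_G^{-1}=\operatorname{adj}(D_G)/g$ and your monicity check add the polynomiality and monicity details that the paper only asserts. One caveat: your claim that multiplication by $d_G(\q)$ is invertible on the signal space is not needed, since the statement only requires the forward implication. It is also false in general; for example, $d_G(\q)=1-\q$, which arises for a controller with integral action, annihilates constant signals. The equivalence of the two descriptions should therefore be read at the transfer-function level, $\breve{\Upsilon}^{-1}\breve{B}=(A-G)^{-1}B$.
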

\begin{pf}
  Subtract $G(q)w(t)$ from both sides of the equation in \eqref{eq:model_original}, substitute \eqref{eq:model_G} for $G(q)$, and premultiply both sides of the equation with $d_G(\q)$.
\end{pf}

While $\breve{A}(\q)$ is symmetric, $\breve{G}(\q)$ is a nonsymmetric polynomial matrix destroying the symmetric nature of the network model. This means that, while the mixed network model \eqref{eq:model_original} has symmetric $A(\q)$, the mixed network model \eqref{eq:model_poly} has nonsymmetric $\breve{\Upsilon}(\q)$. In addition, the model matrices in \eqref{eq:model_original} are all pre-multiplied with monic polynomial $d_G(\q)$ to make the network model \eqref{eq:model_poly} polynomial. This means that the elements of $\breve{A}(\q)$, $\breve{B}(\q)$, and $\breve{F}(q)$ have the common term $d_G(\q)$. Further, $\breve{A}(\q)$, $\breve{B}(\q)$, $\breve{G}(\q)$ (and $N_G(\q)$), and $\breve{F}(q)$ adopt the zero structure of $A(\q)$, $B(\q)$, $G(q)$, and $F(q)$, respectively, meaning that they still capture the interconnection structure of the network. If the directed dynamics are described by a polynomial matrix $G(q):=G(\q)$ instead of a rational matrix, then $d_G(\q)=D_G(\q)=1$, simplifying \eqref{eq:model_poly} to $\Upsilon(\q) w(t) = B(\q) r(t) + F(q) e(t)$, with $\Upsilon(\q)=\bar{\Upsilon}(\q):= A(\q) - N_G(\q)$ being polynomial and nonsymmetric. For future reference, define $\Upsilon(q) := A(\q)-G(q)$. 

The input-output relations between the signals of the mixed network model are given by
\begin{equation} \label{eq:model_io}
  w(t) = T_{wr}(q) r(t) + \bar{v}(t), \quad \bar{v}(t) = T_{we}(q) e(t),
\end{equation}
with rational transfer function matrices
\begin{subequations} \label{eq:model_tf} \begin{align}
  T_{wr}(q) &= \breve{\Upsilon}^{-1}(\q) \breve{B}(\q) = \Upsilon^{-1}(q)B(\q), \label{eq:model_tfwr}\\
  T_{we}(q) &= \breve{\Upsilon}^{-1}(\q) \breve{F}(q) = \Upsilon^{-1}(q)F(q).
\end{align}\end{subequations}
and with spectral density
\begin{subequations} \label{eq:model_spectrum} \begin{align}
    \Phi_{\bar{v}}(\omega) :=& ~\Four\{\E[\bar{v}(t)\bar{v}^{\top}(t-\tau) ] \}, \label{eq:model_spectrum1}\\
    =& ~T_{we}(e^{i\omega}) \Lambda T^{\ast}_{we}(e^{i\omega}), \label{eq:model_spectrum2}
\end{align}\end{subequations}
with $\Four$ the discrete-time Fourier transform and $(\cdot)^{\ast}$ the complex conjugate transpose. 

\section{Identification} \label{section:identification}
The full network identification problem that will be considered in this paper concerns the identification of all dynamics in the mixed network. This means that the objective is to identify all polynomial and rational terms in \eqref{eq:model_original}. To do this, we follow a similar reasoning as for full network identification of diffusively coupled networks \cite{KIVITS2023_TAC}, with the main difference being the presence of $G(q)$, which is rational and not polynomial that has to be accounted for. The dynamics of the mixed network are estimated via a prediction-error approach in which all node signals $w(t)$ are predicted based on the measured signals that are available. The one-step-ahead predictor is defined as in \cite{KIVITS2023_TAC}:
\begin{equation}\label{eq:predictor_def}
  \hat{w}(t|t-1) := \E\{w(t) ~|~ w^{t-1}, ~r^t \},
\end{equation}
where $w^{\ell}$ and $r^{\ell}$ refer to the signal samples $w(\tau)$ and $r(\tau)$, respectively, for all $\tau\leq\ell$. For the mixed network model \eqref{eq:model_original}, this one-step-ahead predictor is given by
\begin{equation}\label{eq:predictor}
 \hat{w}(t|t-1) = W(q)z(t)
\end{equation}
with data $z^{\top}(t):=\begin{bmatrix}w^{\top}(t)&r^{\top}(t)\end{bmatrix}$ and with predictor filter $W(q) = \begin{bmatrix}1-W_w(q)&W_r(q)\end{bmatrix}$ with 
\begin{subequations}\label{eq:predictor_filters}\begin{align}
  W_w(q) &= (A_0 - G^{\infty})^{-1} F^{-1}(q) (A(\q) - G(q)), \\
  W_r(q) &= (A_0 - G^{\infty})^{-1} F^{-1}(q) B(\q),
\end{align}\end{subequations}
with $A_0 := \lim_{z\rightarrow\infty}A(z)$ and  $G^{\infty} := \lim_{z\rightarrow\infty} G(z)$.

The models that will be considered are gathered in the network model set $\mathcal{M} := \{ M(\theta), \theta\in\Theta\subset \mathbb{R}^d\}$ with particular models
\begin{equation}\label{eq:models}
  M(\theta) := ( A(\q,\theta), B(\q,\theta), F(q,\theta), G(q,\theta), \Lambda(\theta))
\end{equation}
satisfying the properties of \eqref{eq:model_original}, where $\theta$ contains all unknown coefficients that appear in the entries of the model matrices. This leads to a parameterised one-step-ahead predictor $\hat{w}(t|t-1;\theta) = W(q,\theta)z(t)$ and a related parameterised one-step-ahead prediction error $\varepsilon(t,\theta) := w(t) - \hat{w}(t|t-1;\theta)$. The parameters of the mixed network model are then estimated through the weighted least-squares identification criterion
\begin{equation}\label{eq:criterion_theta}
  \tn = \arg \min_{\theta\in\Theta} \frac{1}{N} \sum_{t=1}^{N} \varepsilon^{\top}(t,\theta) S \varepsilon(t,\theta)
\end{equation}
with a user-designed weighting matrix $S\succ0$.

In the sequel of this paper, we will address two main questions: (a) under which conditions does this estimate lead to a consistent estimate of the mixed network coefficients; and (b) is there a practical alternative for the algorithmic optimization of \eqref{eq:criterion_theta}, which is a nonstandard and not very attractive nonconvex optimization problem due to the combination of polynomial and rational terms.

For showing consistency of the estimator \eqref{eq:criterion_theta} we first need to investigate under which conditions the considered model set of mixed network models is identifiable. 

\section{Network identifiability} \label{section:identifiability}

\subsection{Introduction} \label{subsection:identifiability_intro}
For effectively using the model set $\mathcal{M}$ based on \eqref{eq:models} in the identification procedure, we need to make sure that the different network models in $\mathcal{M}$ are distinguishable on the basis of measurement data. This is captured in a uniqueness-oriented concept of network identifiability, the definition of which we adopt from \cite{WEERTS2018_AUTO1}.

\begin{definition}[Network identifiability \cite{WEERTS2018_AUTO1}.] \label{def:netidf}
  The network model set $\mathcal{M}$ is \emph{globally network identifiable} from data $z^{\top}(t):=\begin{bmatrix}w^{\top}(t)& r^{\top}(t)\end{bmatrix}$ if the parameterised model ${M}(\theta)$ can uniquely be recovered from $T_{wr}(q,\theta)$ and $\Phi_{\bar{v}}(\omega,\theta)$. That is, if for all models ${M}(\theta_1), {M}(\theta_2) \in {\mathcal{M}}$ it holds that
  \begin{equation} \label{eq:netidf} \begin{rcases*}
    T_{wr}(q,\theta_1) = T_{wr}(q,\theta_2) \\
    \Phi_{\bar{v}}(\omega,\theta_1) = \Phi_{\bar{v}}(\omega,\theta_2)
    \end{rcases*} \Rightarrow M(\theta_1) = M(\theta_2).
    \end{equation}
\end{definition}

Network identifiability for diffusively coupled networks has been studied in \cite{KIVITS2023_TAC} for networks that can be represented in polynomial form. The current mixed situation that includes rational terms asks for some extensions of this analysis. In order to benefit maximally from identifiability results for polynomial representations, we first consider a polynomial model class that is typically larger than the model set $\mathcal{M}$. In a second step we will then analyse identifiability of $\mathcal{M}$.

\subsection{Polynomial parameterisation} \label{subsection:identifiability_poly}
As a first step in analysing network identifiability for model set $\mathcal{M}$ \eqref{eq:models}, we will study a closely related model set $\breve{\mathcal{M}}$ corresponding to the polynomial mixed network model \eqref{eq:model_poly} as formulated in Proposition~\ref{prop:poly}. The main terms in this model set are polynomial and independently parameterised, where we actually discard the common term $d_G(\q)$ present in \eqref{eq:model_breveA}-\eqref{eq:model_breveF}.

\begin{definition}[Polynomial parameterisation] \label{def:breve}
  We define a network model set $\breve{\mathcal{M}} := \{ \breve{M}(\eta), \eta\in\Pi\subset \mathbb{R}^m\}$, through polynomial parameterisation
  \begin{equation}\label{eq:models_breve}
    \breve{M}(\eta) := ( \breve{A}(\q,\eta), \breve{B}(\q,\eta), \breve{F}(q,\eta), \breve{G}(\q,\eta), \breve{\Lambda}(\eta))
  \end{equation}
  where each model represents a network according to \eqref{eq:model_poly}:
  \begin{subequations}\begin{align}
    \breve{\Upsilon}(\q,\eta)w(t) &= \breve{B}(\q,\eta)r(t) + \breve{F}(q,\eta) e(t), \\ 
    \breve{\Upsilon}(\q,\eta) &= \breve{A}(\q,\eta)-\breve{G}(\q,\eta),
  \end{align}\end{subequations}
  with $\breve{A}(\q)$, $\breve{B}(\q)$, and $\breve{G}(\q)$ polynomial; $\breve{A}(\q)$ symmetric; $\breve{A}(\q)$, $\breve{B}(\q)$, $\breve{F}(q)$, and $\breve{G}(\q)$ adopting the zero-structure of the related terms $A(\q)$, $B(\q)$, $F(q)$, and $G(q)$, respectively; and where $\eta$ independently parameterises all unknown terms in these matrices.
\end{definition}

Note that the description of $\breve{M}(\eta)$ follows the polynomial representation of mixed networks as formulated in Proposition \ref{prop:poly}, but without the common term $d_G(\q)$ in $\breve{A}(\q)$, $\breve{B}(\q)$ and $\breve{F}(q)$. The constraint of having this common term is therefore relaxed in the model set $\breve{\mathcal{M}}$.

The resulting network identifiability conditions for the model set $\breve{\mathcal{M}}$ can now be formulated as an extension of the network identifiability result for diffusively coupled networks \cite{KIVITS2023_TAC}:
\begin{proposition}[Network identifiability of $\breve{\mathcal{M}}$] \label{prop:identifiability_breve}
  Network model set $\breve{\mathcal{M}}$ is \emph{globally network identifiable} from data $z(t)$ if the following conditions are satisfied:
  \begin{enumerate}
    \item $\breve{\Upsilon}(\q)$ and $\breve{B}(\q)$ are left coprime in $\breve{\mathcal{M}}$.
    \item A permutation matrix $P$ exists such that within $\breve{\mathcal{M}}$, $\begin{bmatrix}\breve{\Upsilon}_0 & \breve{\Upsilon}_1 & \cdots & \breve{\Upsilon}_{n_{\breve{\upsilon}}} & \breve{B}_0 & \breve{B}_1 & \cdots & \breve{B}_{n_{\breve{b}}}\end{bmatrix} P = \begin{bmatrix}\breve{D} & \breve{R} \end{bmatrix}$ with $\breve{D}$ square, diagonal, and full rank.
    \item At least one excitation signal $r_k(t)$ is present.
    \item There is at least one constraint on the parameters of $\breve{\Upsilon}(\q,\eta_{\Upsilon})$ and $\breve{B}(\q,\eta_B)$ of the form $\breve{\Gamma}\bar{\eta}=\breve{\gamma}\neq0$, with $\breve{\Gamma}$ full row rank and with $\bar{\eta}^{\top} := \begin{bmatrix}\eta_{\Upsilon}^{\top}&\eta_B^{\top}\end{bmatrix}$.
    \item The zero structure of $\breve{G}(\q)$ is known in $\breve{\mathcal{M}}$.
    \item $\breve{g}_{ji}(\q)=0$ if $\breve{g}_{ij}(\q)\neq0$, $\forall i,j$, within $\breve{\mathcal{M}}$.
  \end{enumerate}
\end{proposition}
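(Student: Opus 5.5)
Assume two models $\breve{M}(\eta_1),\breve{M}(\eta_2)\in\breve{\mathcal{M}}$ with equal $T_{wr}$ and equal $\Phi_{\bar{v}}$. The plan is to show first that $(\breve{\Upsilon},\breve{B})$ coincide and then that $\breve{F}$ and $\breve{\Lambda}$ coincide, following the argument for diffusively coupled networks in \cite{KIVITS2023_TAC}. The step specific to this setting is splitting $\breve{\Upsilon}$ into $\breve{A}$ and $\breve{G}$, which conditions 5 and 6 make possible.

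\textbf{Step 1: recover $(\breve{\Upsilon},\breve{B})$.} From \eqref{eq:model_tfwr}, equality of transfers gives $\breve{\Upsilon}_1^{-1}\breve{B}_1=\breve{\Upsilon}_2^{-1}\breve{B}_2$. Both pairs are left coprime by condition 1, so they are two left coprime polynomial factorisations of the same rational matrix. Hence a unimodular $U(\q)$ exists with $\breve{\Upsilon}_2=U\breve{\Upsilon}_1$ and $\breve{B}_2=U\breve{B}_1$.

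Condition 2 supplies a square, diagonal, full rank block $\breve{D}$ among the stacked coefficient matrices in both models. Comparing this block in the two models, $U\breve{D}_1=\breve{D}_2$, and the standard degree argument from \cite{KIVITS2023_TAC} then gives two conclusions: $U$ must be a constant matrix, and $U=\breve{D}_2\breve{D}_1^{-1}$ is diagonal. Condition 3 guarantees that $\breve{B}$ is nontrivial, so $T_{wr}$ actually carries this information.

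\textbf{Step 2: make $U$ the identity.} The scaling freedom left by a constant diagonal $U$ is removed by the parameter constraint in condition 4. Here I expect to argue, as in \cite{KIVITS2023_TAC}, that $\breve{\Gamma}\bar{\eta}=\breve{\gamma}\neq0$ holding for both models forces the diagonal entries of $U$ to equal one. Strictly this needs the constraint to touch every row; I will adopt the interpretation used in \cite{KIVITS2023_TAC} and note it.

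\textbf{Step 3: separate $\breve{A}$ and $\breve{G}$.} This is the main new ingredient. We now know $\breve{\Upsilon}=\breve{A}-\breve{G}$ exactly, but we must show the split is unique. $\breve{G}$ is hollow, so the diagonal of $\breve{\Upsilon}$ gives the diagonal of $\breve{A}$ directly. For an off-diagonal pair $(i,j)$ we have
\[
\breve{\upsilon}_{ij}=\breve{a}_{ij}-\breve{g}_{ij},\qquad \breve{\upsilon}_{ji}=\breve{a}_{ij}-\breve{g}_{ji},
\]
using the symmetry of $\breve{A}$. By condition 5 the zero structure of $\breve{G}$ is known, and by condition 6 at most one of $\breve{g}_{ij},\breve{g}_{ji}$ is nonzero. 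Say $\breve{g}_{ji}=0$; then $\breve{a}_{ij}=\breve{\upsilon}_{ji}$ and $\breve{g}_{ij}=\breve{a}_{ij}-\breve{\upsilon}_{ij}$, and every pair is resolved uniquely.

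The main obstacle is that Steps 1 and 2 use condition 2 on the nonsymmetric $\breve{\Upsilon}$. I must check that the reasoning of \cite{KIVITS2023_TAC} does not rely on symmetry of the denominator; I expect only left coprimeness and the diagonal full-rank block are needed. Condition 6 is what removes the ambiguity that symmetry removed there.

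\textbf{Step 4: recover the noise model.} With $\breve{\Upsilon}$ known, $\breve{\Upsilon}\Phi_{\bar{v}}\breve{\Upsilon}^{*}=\breve{F}\breve{\Lambda}\breve{F}^{*}$ is known. Since $\breve{F}$ is monic, stable and stably invertible, uniqueness of the spectral factorisation yields $\breve{F}$ and $\breve{\Lambda}$. Hence $\breve{M}(\eta_1)=\breve{M}(\eta_2)$.
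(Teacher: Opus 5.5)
Your Steps 1, 3 and 4 follow the paper's route: uniqueness of the left coprime factorisation up to a unimodular $U$, diagonality of $U$ from condition 2, the $\breve{A}$/$\breve{G}$ split from conditions 5--6, and spectral factorisation. Step 2, however, has a genuine gap, exactly where you hedged. After Step 1 you only know $U=\mathrm{diag}(u_1,\dots,u_L)$ with $L$ free nonzero constants. Condition 4 guarantees merely \emph{one} scalar linear constraint $\breve{\Gamma}\bar{\eta}=\breve{\gamma}\neq0$, for example a single coefficient fixed to $1$. If that constraint only involves coefficients in row $k$ of $\begin{bmatrix}\breve{\Upsilon} & \breve{B}\end{bmatrix}$, it yields $u_k=1$ and says nothing about the other $u_i$. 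So the claim that it forces every diagonal entry of $U$ to one is false under the stated hypotheses, and strengthening condition 4 to ``touch every row'' would prove a different proposition. The paper instead uses the symmetry of $\breve{A}$ to collapse the diagonal $U$ to $\alpha I$ \emph{before} invoking condition 4. The constraint then only has to fix one scalar: $\breve{\Gamma}(\alpha\bar{\eta}_1)=\breve{\gamma}=\breve{\Gamma}\bar{\eta}_1$ with $\breve{\gamma}\neq0$ gives $\alpha=1$.

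Concretely, the missing step is as follows. Take $i\neq j$ with $\breve{g}_{ij}=\breve{g}_{ji}=0$ in $\breve{\mathcal{M}}$, which condition 5 lets you identify. Both models then have $\breve{\upsilon}_{ij}=\breve{\upsilon}_{ji}=\breve{a}_{ij}$, so $u_i\breve{a}_{ij}(\eta_1)=\breve{a}_{ij}(\eta_2)=u_j\breve{a}_{ij}(\eta_1)$. Hence $u_i=u_j$ whenever $\breve{a}_{ij}(\eta_1)\neq0$, and propagating this over the connected undirected interconnection structure gives $U=\alpha I$. For a pair carrying a directed link the same comparison yields nothing, because the free $\breve{g}_{ij}(\eta_2)$ absorbs the discrepancy. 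The propagation must therefore run over purely undirected links; the paper compresses this into its remark on the ``symmetric part of $\breve{\Upsilon}$''. This also shows that your remark that condition 6 ``removes the ambiguity that symmetry removed'' in the diffusive case misattributes roles. Symmetry must still eliminate the non-scalar diagonal freedom in $U$. Conditions 5 and 6 only tell you which entries of $\breve{\Upsilon}$ are purely symmetric and make the final split into $\breve{A}$ and $\breve{G}$ unique. In particular, part of your Step 3 reasoning has to be used before Step 2, not after it.
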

\begin{pf}
  The proof is included in Appendix~\ref{app:prop_idfbreve}.
\end{pf}

Condition (3) is needed so that $T_{wr}(q,\eta)$ exists, because $T_{we}(q,\eta)$ is not sufficient for determining a unique $\breve{\Upsilon}(\q)$. The parameter constraint in Condition (4) means that one nonzero parameter or the sum of some nonzero parameters is known. Conditions (5) implies that the locations of the directed links are known, while Condition (6) indicates that there can only be a directed link in at most one direction between two nodes. Condition (1) ensures that there are no common terms introduced when $T_{wr}(q)$ is factorised into polynomial matrices. The diagonality constraint of Condition (2) is explained in Section~\ref{subsection:identifiability_original}, where network identifiability conditions are formulated for the original network model \eqref{eq:model_original} and additional interpretation of these conditions is provided. 

\subsection{Network identifiability of $\mathcal{M}$} \label{subsection:identifiability_original}
To translate the network identifiability conditions for the mixed network in polynomial form $\breve{\mathcal{M}}$ into the network identifiability conditions for the original representation $\mathcal{M}$, the following result is achieved:

\begin{proposition}[Network identifiability] \label{prop:identifiability}
  Network model set $\mathcal{M}$ is \emph{globally network identifiable} from data $z(t)$ if the following conditions are satisfied:
  \begin{enumerate}
    \item $\begin{bmatrix}A(\q) & N_G(\q)\end{bmatrix}$ and $B(\q)$ left coprime in $\mathcal{M}$.
    \item A permutation matrix $P$ exists such that within $\mathcal{M}$, $\begin{bmatrix}\bar{\Upsilon}_0 & \bar{\Upsilon}_1 & \cdots & \bar{\Upsilon}_{n_{\upsilon}} & B_0 & B_1 & \cdots & B_{n_b}\end{bmatrix} P = \begin{bmatrix}D & R\end{bmatrix}$ with $D$ square, diagonal, and full rank.
    \item At least one excitation signal $r_k(t)$ is present.
    \item There is at least one constraint on the parameters of $A(\q,\theta_A)$ and $B(\q,\theta_B)$ of the form $\Gamma\bar{\theta}=\gamma\neq0$, with $\Gamma$ full row rank and with $\bar{\theta}^{\top} := \begin{bmatrix}\theta_A^{\top}&\theta_B^{\top}\end{bmatrix}$.
    \item The zero structure of $G(q)$ is known in $\mathcal{M}$.
    \item $G_{ji}(q)=0$ if $G_{ij}(q)\neq0$, $\forall i,j$, within $\mathcal{M}$.
  \end{enumerate}
\end{proposition}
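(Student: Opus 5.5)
The plan is to reduce the statement to Proposition~\ref{prop:identifiability_breve}. Take two models $M(\theta_1),M(\theta_2)\in\mathcal{M}$ with equal $T_{wr}$ and $\Phi_{\bar v}$. Map each to its polynomial form through Proposition~\ref{prop:poly}, which gives $\breve{M}(\eta_1),\breve{M}(\eta_2)$ with $\breve{\Upsilon}=d_G(A-D_G^{-1}N_G)$, $\breve B=d_GB$ and $\breve F=d_GF$. These polynomial models share the same $T_{wr}$ and $\Phi_{\bar v}$, because the common scalar factor $d_G$ cancels in \eqref{eq:model_tf}. The image set lies inside a model set $\breve{\mathcal{M}}$ of the type in Definition~\ref{def:breve}. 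It therefore suffices to show two things: that Conditions (1)--(6) of the present proposition imply Conditions (1)--(6) of Proposition~\ref{prop:identifiability_breve} for this $\breve{\mathcal{M}}$, and that $M(\theta)$ can be recovered uniquely from $\breve M(\eta)$.

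The conditions translate as follows. Conditions (3), (5) and (6) carry over directly, since $\breve G=d_GD_G^{-1}N_G$ has exactly the zero structure of $G$ and of $N_G$, and $d_GD_G^{-1}$ is a diagonal polynomial matrix. For Condition (4), multiplying by the monic polynomial $d_G$ leaves the leading coefficients unchanged, so a constraint $\Gamma\bar\theta=\gamma$ on coefficients of $A,B$ induces a linear constraint on the $\breve\Upsilon,\breve B$ parameters. I would check this by writing the coefficients of $d_GA$ as a lower-triangular Toeplitz map of those of $A$ with unit diagonal. If needed, I would restrict to a constraint on the lowest-order (leading) coefficients, which $d_G$ preserves exactly. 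For Condition (2), the zeroth-order coefficients satisfy $\breve\Upsilon_0=A_0-[D_G^{-1}N_G]_0=\bar\Upsilon_0$ and $\breve B_0=B_0$, because $d_G$ and $D_G$ are monic. So I would take $P$ to pick diagonal columns from $[\bar\Upsilon_0\ B_0]$. If the diagonal block $D$ of Condition (2) uses higher-order coefficients, I would argue via the same triangular structure that diagonal full-rank columns survive. This is a point to verify carefully.

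The main obstacle is Condition (1): showing that left coprimeness of $[A\ N_G]$ and $B$ gives left coprimeness of $\breve\Upsilon=d_GA-d_GD_G^{-1}N_G$ and $\breve B=d_GB$. Suppose a nonunimodular common left factor $U$ existed. I would first show that it cannot divide $d_G$: the columns of $d_GB$ corresponding to rows with $D_{G_{ii}}=1$, together with coprimeness of $D_G,N_G$, should rule this out. I would then show that any remaining factor would divide $[A\ N_G]$ and $B$, using the row-wise structure $\breve\Upsilon_{i\cdot}=d_GA_{i\cdot}-(d_G/D_{G_{ii}})N_{G,i\cdot}$. I expect this step to need Condition (6) and the hollowness of $G$ to separate the $A$ and $N_G$ contributions.

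Finally, Proposition~\ref{prop:identifiability_breve} gives $\breve M(\eta_1)=\breve M(\eta_2)$. I would then recover $M$. The symmetric part of $\breve\Upsilon$ and the support of $\breve G$ separate $\breve A$ from $\breve G$, since under Condition (6) each off-diagonal pair has at most one directed entry, so $a_{ij}=a_{ji}$ pins down $\breve a_{ij}$. Next, $d_G$ is obtained as the monic greatest common divisor of $\breve A,\breve B,\breve F$ after removing the common factor. Then $A=\breve A/d_G$, $B=\breve B/d_G$, $F=\breve F/d_G$ and $G=\breve G/d_G$, with $\Lambda$ read off directly. This shows $M(\theta_1)=M(\theta_2)$.
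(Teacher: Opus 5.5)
Your reduction to Proposition~\ref{prop:identifiability_breve} fails at Condition~(4). Multiplication by the monic $d_G$ preserves only the zeroth-order coefficients ($\breve{A}_0=A_0$, $\breve{B}_0=B_0$). In general $\breve{A}_k=\sum_{\ell} d_{G,\ell}A_{k-\ell}$, so $\Gamma\bar\theta=\gamma$ turns into $\Gamma T(d_G)^{-1}\bar\eta=\gamma$. Here $T(d_G)$ is a unit-diagonal Toeplitz matrix built from the \emph{unknown} coefficients of $d_G$. That is not a fixed linear constraint $\breve{\Gamma}\bar\eta=\breve{\gamma}\neq0$ on an independently parameterised $\breve{\mathcal{M}}$. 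For a constraint on, say, a first-order coefficient of $A$, varying the coefficients of $d_G$ shows that no other fixed linear functional of $\eta$ is constant and nonzero on the image of $\mathcal{M}$ either. The paper's remark on the parameter constraint in Section~\ref{subsection:algorithm} warns about exactly this. Restricting $\Gamma$ to zeroth-order coefficients would only prove a weaker proposition.

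The paper therefore never uses Condition~(4) of Proposition~\ref{prop:identifiability_breve}. From Lemmas~\ref{lem:LMFD} and~\ref{lem:LMFD2} it takes only that $\breve{A},\breve{B},\breve{G}$ are unique up to a scalar $\alpha$. It then applies Lemma~\ref{lem:unique}: the monic greatest common divisor of the entries of $\alpha\breve{A},\alpha\breve{B}$ is $d_G$. Removing it leaves $\alpha A$, $\alpha B$, $\alpha G$, and only then does the constraint on $\theta$ fix $\alpha$. You already have the gcd step in your last paragraph. The repair is to perform it \emph{before} fixing the scale, instead of trying to fix the scale inside $\breve{\mathcal{M}}$. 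Take the gcd over the polynomial entries of $\breve{A},\breve{B}$ only, since $\breve{F}$ is rational.

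Your planned argument for Condition~(1) also fails, because the rows with $D_{G_{ii}}=1$ that you want to use are exactly where a common factor sits. Such a row belongs to a node without incoming directed links; for $L=2$, Condition~(6) guarantees one exists. There the whole row of $[\breve{\Upsilon}\ \ \breve{B}]$ equals $d_G\,[A_{i\cdot}-N_{G,i\cdot}\ \ B_{i\cdot}]$, and in general row $i$ carries the scalar factor $d_G/D_{G_{ii}}$. So as soon as $G$ is not polynomial and some node has no incoming directed link, $\mathrm{diag}(1,\ldots,d_G,\ldots,1)$ is a non-unimodular common left divisor of $\breve{\Upsilon}$ and $\breve{B}$. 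Left coprimeness in $\breve{\mathcal{M}}$ then cannot follow from Condition~(1).

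The paper's Lemma~\ref{lem:LMFD2} asserts this transfer with a short argument that does not address these row factors either, so citing it does not close the step. You would need an argument that does not rely on left coprimeness of the $d_G$-scaled pair. Your worry about Condition~(2) is likewise justified. For $k\geq1$, $\breve{\Upsilon}_k$ mixes lower-order coefficients of $A$ through $d_G$, while the paper only appeals to preservation of zero structure.
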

\begin{pf}
  The proof is included in Appendix~\ref{app:prop_idf}.
\end{pf}

The conditions have a similar interpretation as the conditions in Proposition~\ref{prop:identifiability_breve}. To elaborate on this, the parameter constraint of Condition (4) is, for example, satisfied if one interconnection is (partially) known (in $A(\q,\theta_A)$) or if one excitation signal enters a node through (partially) known dynamics (in $B(\q,\theta_B)$). If the directed dynamics are polynomial, then $d_G(\q) = 1$ and the parameters of $G(\q,\theta_G)$ can be included in this parameter constraint. The constraint is then also satisfied if the directed dynamics in one link are (partially) known. Condition (5) means that the locations of the directed links are known. Condition (6) indicates that $G_{ji}(q)$ and $G_{ij}(q)$ cannot be nonzero at the same time, meaning that there can only be a directed link in at most one direction between two nodes. As there are no requirements on the zero-structure of $A(q)$, the topology of the undirected part can still be unknown. The diagonality constraint of Condition (2) can be understood through the following example: 

\begin{example}[Diagonality constraint] \label{ex:msd_diag}
    Consider the mass-spring-damper system of Example~\ref{ex:msd_diffusive} and consider the following situations: (1) if $B_0$ is square, diagonal and full rank, then all nodes have their own external input signal; (2) if $A_2$ is square, diagonal and full rank, then all nodes have a nonzero mass; (3) if $A_1$ is square, diagonal and full rank, then a damper is present in the connection of each node to the ground and no damper is present in the connections between the nodes; (4) if $A_0$ is square, diagonal and full rank, then a spring is present in the connection of each node to the ground and no spring is present in the connections between the nodes. In each of these cases, the columns of $\begin{bmatrix}A_0&A_1&A_2&B_0\end{bmatrix}$ can easily be sorted such that the first three columns form a square, diagonal, full rank matrix. This can also be done if for each node $w_j$, $j=1,2,3$, one of the above situations holds. For example, if node 1 has a nonzero mass ($M_{10}\neq0$); node 2 has an external excitation signal that is not entering any other node (e.g. $B_0^{\top}=\begin{bmatrix}0&1&0\end{bmatrix}$); and node 3 has a damper connecting it to the ground and no damper in the connections with other nodes ($D_{30}\neq0$,$D_{3k}=0$ for $k\neq0$). 
\end{example}

\section{Consistency of $M(\tn$)} \label{section:consistency} 
To show consistency of the network estimator $M(\tn)$ for the parameter estimator \eqref{eq:criterion_theta}, the data need to contain sufficient information to uniquely recover the predictor filter from second-order statistical properties of the data. This concept of data informativity is formalised in \cite{LJUNG1999}.
\begin{definition}[Data informativity \cite{LJUNG1999}] \label{def:datainf}
  A quasi-stationary data sequence $\{z(t)\}$ is called \emph{informative with respect to the model set $\mathcal{M}$} if for any two $\theta_1,\theta_2\in\Theta$ and predictor filter $W(q,\theta)$,
   \begin{multline}\label{eq:datainf}
    \Eb\left\{\| [W(q,\theta_1)-W(q,\theta_2)]z(t)\|^2 \right\}=0 \\ \Rightarrow \quad \{W(e^{i\omega},\theta_1) = W(e^{i\omega},\theta_2)\}
  \end{multline}
  for almost all $\omega$. 
\end{definition}

Applying Definition~\ref{def:datainf} to mixed networks, leads to the condition for data informativity.
\begin{proposition}[Data informativity] \label{prop:datainformativity}
  The quasi-stationary data sequence $\{z(t)\}$ is \emph{informative with respect to the model set $\mathcal{M}$} if $\Phi_z(\omega)\succ 0$ for a sufficiently large number of frequencies. In the situation $K \geq 1$, this is guaranteed by $\Phi_r(\omega)\succ 0$ for a sufficiently large number of frequencies.
\end{proposition}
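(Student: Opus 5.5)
The plan is to rewrite the left-hand side of \eqref{eq:datainf} as a spectral integral and use positive definiteness of $\Phi_z$ to force the predictor difference to vanish. Set $\Delta W(q) := W(q,\theta_1)-W(q,\theta_2)$. Both predictor filters in \eqref{eq:predictor_filters} are stable and proper for models in $\mathcal{M}$: $F^{-1}$ is stable, $A(\q)-G(q)$ is a proper rational matrix with stable poles, and $(A_0-G^{\infty})^{-1}$ exists by well-posedness. The data $z(t)$ is quasi-stationary, so Parseval's relation gives
\begin{equation*}
  \Eb\left\{\|\Delta W(q) z(t)\|^2\right\} = \frac{1}{2\pi}\int_{-\pi}^{\pi} \mathrm{tr}\left\{ \Delta W(\eio) \Phi_z(\omega) \Delta W^{\ast}(\eio)\right\} d\omega .
\end{equation*}
The integrand is nonnegative. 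If the integral is zero, then $\Delta W(\eio)\Phi_z(\omega)\Delta W^{\ast}(\eio)=0$ for almost all $\omega$. At every frequency where $\Phi_z(\omega)\succ0$ this forces $\Delta W(\eio)=0$.

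Next, I extend this from the frequencies where $\Phi_z(\omega)\succ0$ to almost all $\omega$. $\Delta W$ is a rational matrix whose McMillan degree is bounded in terms of the polynomial degrees in the model set $\mathcal{M}$. A nonzero rational function vanishing at more frequencies than its numerator degree allows must be identically zero. Hence $\Phi_z(\omega)\succ0$ on a sufficiently large set of frequencies gives $\Delta W\equiv 0$. If it holds on a set of positive measure, analyticity gives the same conclusion directly. This is the step where the phrase ``sufficiently large number of frequencies'' must be made precise. I expect it to be the main obstacle, since the bound depends on the orders of $A$, $B$, $F$, and $G$. I will handle it via the standard degree-counting argument of \cite{LJUNG1999}.

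For the second claim ($K\geq1$), write $z(t)$ in terms of $r$ and $e$ using \eqref{eq:model_io}:
\begin{equation*}
  z(t) = \begin{bmatrix} T_{wr}(q) & T_{we}(q) \\ I & 0 \end{bmatrix} \begin{bmatrix} r(t) \\ e(t)\end{bmatrix}.
\end{equation*}
Because $r$ is deterministic and $e$ is zero-mean white noise, the two are uncorrelated. Therefore
\begin{equation*}
  \Phi_z = J \,\mathrm{diag}(\Phi_r,\Lambda)\, J^{\ast}, \qquad J := \begin{bmatrix} T_{wr} & T_{we} \\ I & 0 \end{bmatrix}.
\end{equation*}
$J$ is block-triangular after a permutation, and $T_{we}=\Upsilon^{-1}F$ is square. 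Also $\Upsilon^{-1}$ exists because $(A-G)^{-1}$ exists, and $F$ is stably invertible. So $T_{we}(\eio)$ is nonsingular for almost all $\omega$, and $J(\eio)$ is nonsingular there. With $\Lambda\succ0$, we then get $\Phi_z(\omega)\succ0$ exactly when $\Phi_r(\omega)\succ0$, up to the finitely many frequencies where $T_{we}$ loses rank. Adding these finitely many exceptions to the count from the previous step completes the argument.
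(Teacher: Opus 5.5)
Your proposal is correct and follows the paper's own proof. Parseval's relation forces $\Delta W(\eio)=0$ wherever $\Phi_z(\omega)\succ0$, and writing $z=J\kappa$ with block-diagonal $\Phi_\kappa=\mathrm{diag}(\Phi_r,\Lambda)$ and nonsingular $J$ reduces $\Phi_z\succ0$ to $\Phi_r\succ0$; your degree-counting step and your handling of the finitely many frequencies where $T_{we}$ loses rank spell out what the paper only asserts, since it simply states that $J$ always has full rank. One minor correction: $G\in\mathcal{G}$ is not assumed stable, so stability of $A(\q)-G(q)$, and hence of $W(q,\theta)$, does not follow from the model class but is a standing assumption (cf.\ the uniformly stable predictor assumption in Theorem~\ref{thm:consistency}).
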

\begin{pf}
  The proof is included in Appendix~\ref{app:prop_datainf}. 
\end{pf}

The exact number of frequencies that is sufficient is determined by the number of parameters that are present in the model. The condition that $\Phi_r(\omega)\succ0$ for a sufficiently large number of frequencies seems to be a general condition. However, the dimension of $\Phi_r(\omega)$ depends on the number of excitation signals $r_j(t)$, denoted by $K$, which is specified in the model set. Thus, all excitations signals that are present (according to the model set), need to be persistently exciting. This is because each additional excitation signal $r_j(t)$ also introduces new polynomials $b_{kj}(\q)$ that need to be identified. Further, the data-informativity condition in Proposition~\ref{prop:datainformativity} is equal to that for diffusively coupled networks \cite[Proposition~4]{KIVITS2023_TAC}, which implies that the directed links can be identified for free. This is because the mixed network is a diffusively coupled network with no additional node signals, but with only (directed) couplings added to it.

Informativity of $z(t)$ implies that $W(q)$ can uniquely be recovered from data, which by \eqref{eq:predictor_filters} and \eqref{eq:model_io}-\eqref{eq:model_tf} implies that the pair $(T_{wr}(q),\Phi_{\bar v}(\omega))$ can uniquely be recovered. Identifiability of $\mathcal{M}$ implies that $A(\q,\theta)$, $B(\q,\theta)$, $F(q,\eta)$, $G(q,\theta)$, and $\Lambda(\theta)$ can uniquely be determined from the pair $(T_{wr}(q),\Phi_{\bar v}(\omega))$. If the model set contains the true system, then the conditions for data informativity and network identifiability together induce the consistency result.

Let the data that are available for identification be generated by the system $\mathcal{S} := (A^0, B^0, F^0, G^0, \Lambda^0)$. This true system lies in the model set if $\mathcal{S} = M(\theta^0)$, with $\theta^0\in\Theta$ indicating the true parameter values. 

\begin{theorem}[Consistency] \label{thm:consistency} 
  Consider a data generating system $\mathcal{S}$ and a model set $\mathcal{M}$ with uniformly stable predictor filter $W(q,\theta)$ satisfying \eqref{eq:predictor_filters}. Then $M(\tn)$ is a consistent estimate of $\mathcal{S}$ if the following conditions hold:
  \begin{enumerate}
    \item The true system lies in the model set.
    \item The data are informative with respect to the model set.
    \item The model set is globally network identifiable.
  \end{enumerate}
\end{theorem}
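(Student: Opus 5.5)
We follow the standard prediction-error consistency argument of \cite{LJUNG1999}, as used for diffusively coupled networks in \cite{KIVITS2023_TAC}. The argument has three links: convergence of the criterion, uniqueness of the limiting predictor, and uniqueness of the network model behind that predictor.

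\textbf{Convergence of the criterion.} First we show that $\tn$ converges with probability one to the set of minimisers of the limiting criterion $\bar V(\theta) := \Eb\{\varepsilon^{\top}(t,\theta) S \varepsilon(t,\theta)\}$. This rests on three ingredients:
\begin{itemize}
\item the predictor filters $W(q,\theta)$ are uniformly stable by assumption;
\item $r(t)$ is deterministic and bounded;
\item $e(t)$ is white with bounded moments of order larger than 4, and the network is well-posed and stable, so $w(t)$ is quasi-stationary.
\end{itemize}
Together these give uniform convergence $\sup_{\theta\in\Theta}|V_N(\theta)-\bar V(\theta)|\to 0$ w.p.1, by \cite[Thm.~2.3]{LJUNG1999}. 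Hence $\tn \to \Theta_c := \arg\min_{\theta\in\Theta}\bar V(\theta)$.

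\textbf{Characterising the minimisers.} By Condition (1), $\theta^0\in\Theta$, and then $\varepsilon(t,\theta^0) = (A_0-G^{\infty})^{-1}F^{-1}(q)\Upsilon(q)w(t)-\ldots$. Inserting \eqref{eq:model_io} shows that
\[
\varepsilon(t,\theta^0) = (A_0^0-G^{0,\infty})^{-1}e(t),
\]
which is white. The predictor $W(q,\theta)$ has $1-W_w$ strictly proper, because $W_w^\infty = I$ given that $F$ is monic. Therefore, for any $\theta$,
\[
\varepsilon(t,\theta)-\varepsilon(t,\theta^0) = [W(q,\theta^0)-W(q,\theta)]z(t)
\]
depends only on past $w$ and on $r$. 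This difference is uncorrelated with $e(t)$, since $e$ is independent of $r$ and of past $w$. Consequently
\[
\bar V(\theta)=\bar V(\theta^0)+\Eb\{\|S^{1/2}[W(q,\theta^0)-W(q,\theta)]z(t)\|^2\},
\]
so $\theta\in\Theta_c$ if and only if the last term vanishes. By Condition (2) and Definition~\ref{def:datainf}, this forces $W(e^{i\omega},\theta)=W(e^{i\omega},\theta^0)$ for almost all $\omega$.

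\textbf{From the predictor to the network model.} We now recover $(T_{wr},\Phi_{\bar v})$ from $W$, and this is the step where the mixed structure enters. From \eqref{eq:predictor_filters}, $T_{wr}=(1-W_w)^{-1}\ldots$; more precisely, $W_w^{-1}W_r = \Upsilon^{-1}(q)B(\q)=T_{wr}$. Also
\[
W_w = (A_0-G^\infty)^{-1}F^{-1}\Upsilon = (A_0-G^\infty)^{-1}T_{we}^{-1},
\]
so $T_{we}$ is determined up to the constant factor $(A_0-G^\infty)$. That factor is fixed by the monicity of $F$: $T_{we}^\infty=(A_0-G^\infty)^{-1}$, and $W_w^\infty = I$ is consistent with this. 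The point to verify is that $\Phi_{\bar v}$ is determined. The innovation covariance is $\Eb\{\varepsilon\varepsilon^\top\}=(A_0-G^\infty)^{-1}\Lambda(A_0-G^\infty)^{-\top}$, and it is attained at the minimum. Combining it with $W_w^{-1}$ gives
\[
\Phi_{\bar v}=W_w^{-1}\,\Eb\{\varepsilon\varepsilon^\top\}\,W_w^{-\ast},
\]
which is identical for $\theta$ and $\theta^0$ once both the filters and the residual covariance coincide. The main subtlety is that the criterion with weight $S$ does not by itself fix $\Lambda$. We therefore argue that $\Lambda(\theta)$ is determined, given the filters, by the sample covariance of $\varepsilon(t,\tn)$, which converges to that of $\varepsilon(t,\theta^0)$. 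Alternatively, we note that equality of $T_{wr}$ together with Condition (3) already pins down $A$, $B$ and $G$, and hence $F$ and $\Lambda$.

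\textbf{Conclusion.} Thus $(T_{wr},\Phi_{\bar v})(\theta)=(T_{wr},\Phi_{\bar v})(\theta^0)$ for every $\theta\in\Theta_c$. Global network identifiability (Condition (3), via Definition~\ref{def:netidf}) then gives $M(\theta)=M(\theta^0)=\mathcal S$, so $M(\tn)\to\mathcal S$ w.p.1.
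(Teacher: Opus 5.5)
Your argument follows the same route as the paper's proof. Convergence of the criterion comes from Ljung's uniform-convergence result. The minimiser is characterised by splitting off the innovation: the paper's split of the prediction error into an $r$-term, a strictly proper $e$-term and $(\Upsilon_0^0)^{-1}e(t)$ is exactly your $\varepsilon(t,\theta)=\varepsilon(t,\theta^0)+[W(q,\theta^0)-W(q,\theta)]z(t)$. Informativity then makes the limiting predictor unique (you inline the reasoning of Ljung's Theorem~8.3), and identifiability concludes. You are more explicit than the paper at one step: the passage from $W$ to $(T_{wr},\Phi_{\bar v})$, which the paper asserts in a single sentence. That same step is where your write-up does not yet close.

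The filters in \eqref{eq:predictor_filters} do not involve $\Lambda(\theta)$, so neither does the criterion \eqref{eq:criterion_theta}. Hence $\Theta_c$ contains parameters with arbitrary $\Lambda(\theta)$, and your final claim that $(T_{wr},\Phi_{\bar v})(\theta)=(T_{wr},\Phi_{\bar v})(\theta^0)$ for every $\theta\in\Theta_c$ is false as stated.

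Your alternative (b) does not repair this. Equal $W$ gives $T_{wr}=W_w^{-1}W_r$ and, once $A$ and $G$ are known, $F=\Upsilon W_w^{-1}\Upsilon_0^{-1}$ with $\Upsilon_0=A_0-G^{\infty}$. But nothing in $W$ determines $\Lambda$. Moreover, Definition~\ref{def:netidf} takes equality of $\Phi_{\bar v}$ as a premise, so Condition (3) cannot be invoked with $T_{wr}$ alone. If $\Lambda$ is parameterised independently over all positive definite matrices, you can still extract $A,B,G,F$ from Condition (3). Given $W(q,\theta)=W(q,\theta^0)$, replace $\Lambda(\theta)$ by $C\Lambda^0C^{\top}$ with $C=\Upsilon_0(\theta)(\Upsilon_0^0)^{-1}$; this equalises $\Phi_{\bar v}$ without changing $W$. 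Even so, $\Lambda$ itself remains unfixed.

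The correct fix is your option (a), but it must be stated as part of the estimator. Set $\Lambda(\tn):=\Upsilon_0(\tn)\,\hat\Lambda_{\varepsilon}\,\Upsilon_0^{\top}(\tn)$ with $\hat\Lambda_{\varepsilon}=\frac{1}{N}\sum_{t=1}^{N}\varepsilon(t,\tn)\varepsilon^{\top}(t,\tn)$. The same uniform convergence result gives $\hat\Lambda_{\varepsilon}\to\Eb\{\varepsilon(t,\theta^0)\varepsilon^{\top}(t,\theta^0)\}$. Your identity $\Phi_{\bar v}=W_w^{-1}\Eb\{\varepsilon\varepsilon^{\top}\}W_w^{-\ast}$ then yields $\Phi_{\bar v}(\omega,\tn)\to\Phi_{\bar v}(\omega,\theta^0)$, after which Condition (3) applies as you intend.

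Finally, drop the sentence claiming that monicity of $F$ fixes the constant factor in $T_{we}$. The quantity $T_{we}^{\infty}=\Upsilon_0^{-1}$ is not determined by $W$, and you do not need it, because that factor is absorbed into the residual covariance.
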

\begin{pf}
  The proof is included in Appendix~\ref{app:thm_consistency}. 
\end{pf}

Any weight $S\succ0$ leads to consistent estimates, but the minimum variance is only achieved for $S=(\Lambda^0)^{-1}$. 

Now that it has been proved that mixed networks \eqref{eq:model_original} can consistently be identified, the next step is to formulate an algorithm for obtaining the estimate.

\section{Network identification algorithm} \label{section:algorithm}

\subsection{Identification approach} \label{subsection:approach}
The most straightforward choice for estimating $\tn$ would be to directly solve the estimation problem \eqref{eq:criterion_theta}. However, given the combination of polynomial and rational terms in the models $M(\theta)$ \eqref{eq:models}, this will be a nonstandard and nonconvex optimization problem. This is not attractive, in particular when the dimension of the network (i.e. the number of node signals in $w(t)$) is large. In line with the algorithm presented in \cite{KIVITS2023_TAC} for diffusively coupled networks, we present an alternative two-step approach: 
\begin{enumerate}
    \item Estimate a polynomial model in the model set $\breve{\mathcal{M}}$ as formulated in Definition~\ref{def:breve}.
    \item Map the estimated parameters to a model in the model set $\mathcal{M}$, thus obtaining the physical coefficients of the mixed network.
\end{enumerate}
These two steps are subsequentially discussed. 

\subsection{Step 1: Estimation of the polynomial form} \label{subsection:estimation_poly}
For estimating the polynomial form, the multi-step algorithm for diffusively coupled networks \cite{KIVITS2023_TAC} can be adapted to account for the directed dynamics that are present in the mixed network. 

The polynomial model set $\breve{\mathcal{M}}$ can be identified with a prediction-error approach similar to the one formulated in Section~\ref{section:identification} for the original model set $\mathcal{M}$. Consistency of the network estimator $\breve{M}(\en)$ can be proved along the same lines as consistency of $M(\tn)$ as described in Section~\ref{section:consistency}. The conditions for network identifiability of $\breve{\mathcal{M}}$ are formulated in Proposition~\ref{prop:identifiability_breve}, while data informativity directly follows from Proposition~\ref{prop:datainformativity}, because both models $\breve{M}(\eta)$ and $M(\theta)$ describe the same input-output behaviour \eqref{eq:model_io}-\eqref{eq:model_tf}. The final consistency result is closely related to the one for diffusively coupled networks \cite{KIVITS2023_TAC}, with the addition that Condition (5) and (6) in Proposition~\ref{prop:identifiability_breve} for network identifiability have to be satisfied to account for the directed dynamics represented by $\breve{G}(\q)$ that also make $\breve{\Upsilon}(\q)$ partially nonsymmetric. 

This analysis leads to a nonconvex optimization problem, which is not attractive as an algorithm. Especially for networks with many node signals in $w(t)$, this results in high computational complexity and occurrences of local optima. Alternatively, the multi-step algorithm formulated in \cite{KIVITS2023_TAC} is proposed, because this algorithm is convex and leads to a unique solution. The polynomial form~\eqref{eq:model_poly} can consistently be identified by incorporating the partially asymmetric structure of $\breve{\Upsilon}(\q)$ (defined by the known structure of $\breve{G}(\q)$) into the multi-step algorithm for diffusively coupled networks \cite{KIVITS2023_TAC}\footnote{This algorithm is formulated for the situation where $F(q)=F(\q)$ is polynomial. However, the algorithm can be adapted to handle the situation where $F(q)$ is rational.}. This algorithm consists of four steps: 
\begin{enumerate}[label=(\alph*)]
    \item Estimate a nonparametric autoregressive with exogenous variables (ARX) model by least-square. 
    \item Reduce the nonparametric ARX model to a parametric model by weighted constrained least-squares. 
    \item Improve the parametric model by weighted constrained least-squares.
    \item Obtain the noise model by direct calculation.
\end{enumerate}

Step (a) of this algorithm does not include any structural information of the model, while an infinite series expansion is parameterised by high-order polynomials. This high-order ARX model is used to identify a structured network model in step (b), by incorporating the symmetric and nonmonic structure of $\breve{A}(\q)$ and the asymmetric structure of $\breve{G}(\q)$ in the parameterisation. The constraint comes from the parameter constraint as described by Condition (4) of Proposition~\ref{prop:identifiability_breve}. Step (c) aims to correct for the residuals that are not accounted for in step (a), due to the high-order approximation of the nonparametric ARX model. Step (a)-(c) are the main three steps of the algorithm and all include a convex least-squares optimization. Finally, the noise model is directly calculated from the parametric model to correct the noise model for the nonmonicity of $\breve{A}(\q)$. 

\subsection{Step 2: Estimation of the network model} \label{subsection:estimation_original}

\subsubsection{Identification approach}
After estimating a polynomial model in the model set $\breve{\mathcal{M}}$, the estimated parameters $\en$ can be mapped into a model in the model set $\mathcal{M}$, with which we obtain the physical coefficients of the mixed network. We do this by exploiting the mapping in \eqref{eq:model_breveA}-\eqref{eq:model_breveG} and by showing that a consistent estimate of $\eta$ leads to a consistent estimate of $\theta$.

According to Lemma~\ref{lem:unique} in Appendix~\ref{app:prop_idf}, $A(\q)$, $B(\q)$, and $G(q)$ can uniquely be found from $\breve{A}(\q)$, $\breve{B}(\q)$, and $\breve{G}(\q)$ if Condition (4) of Proposition~\ref{prop:identifiability} is satisfied. Algorithmically, this can be solved using three so-called (weighted) null-space fitting (WNSF) steps \cite{GALRINHO2019}, which are convex least-square optimization steps. Finally, the noise model is recovered, which follows from direct calculation. This gives the following algorithmic steps: 
\begin{enumerate}[label=(\alph*)]
    \item Estimate $A(\q)$ and $B(\q)$ by exploiting \eqref{eq:model_breveA} and \eqref{eq:model_breveB}.
    \item Estimate the common scalar polynomial $d_G(\q)$ from the estimated $A(\q)$ and $B(\q)$ and the relations \eqref{eq:model_breveA} and \eqref{eq:model_breveB}.
    \item Estimate the directed dynamics $G(q)$ from the estimated $d_G(\q)$ and the relations \eqref{eq:model_breveG} and \eqref{eq:model_G}. 
    \item Obtain the noise model represented by $F(q)$ and $\Lambda$ from the estimated $d_G(\q)$ and \eqref{eq:model_breveF}. 
\end{enumerate}
These steps are subsequently discussed, after which we show that this leads to a consistent estimate of $\theta$.

\subsubsection*{(a) Estimate $A(\q)$ and $B(\q)$}
Consider the model structure $\breve{A}(\q,\eta_{\breve{A}})$ and $\breve{B}(\q,\eta_{\breve{B}})$ as models of $\breve{A}(\q)$ and $\breve{B}(\q)$, respectively, and the model structure $A(\q,\theta_A)$ and $B(\q,\theta_B)$ as models of $A(\q)$ and $B(\q)$, respectively. For notational elegance, let us define the parameter vectors $\zeta^{\top}:=\begin{bmatrix}\eta_{\breve{A}}^{\top}&\eta_{\breve{B}}^{\top}\end{bmatrix}$ and $\vartheta^{\top}:=\begin{bmatrix}\theta_A^{\top}&\theta_B^{\top}\end{bmatrix}$.

As $d_G(\q)$ is a scalar polynomial and $A(\q)$ is invertible, \eqref{eq:model_breveA} can be used to express $d_G^0(\q)$ as 
\begin{equation}\label{eq:relation_dA}
  d_G^0(\q) I = \big(A^0(\q)\big)^{-1} \breve{A}^0(\q),
\end{equation}
where $I$ is the identity matrix of appropriate size. Together with \eqref{eq:model_breveB} this gives
\begin{equation}\label{eq:relation_AB}
  A^0(\q) \breve{B}^0(\q) - \breve{A}^0(\q) B^0(\q) = 0,
\end{equation}
from which we can extract 
\begin{equation}\label{eq:estimate_nullspace_AB}
    -Q(\zeta^0)\vartheta^0=0,
\end{equation}
where $\vartheta^0$ represents the coefficients of $A^0(\q)$ and $B^0(\q)$ and where $\breve{A}^0(\q)$ and $\breve{B}^0(\q)$ of the polynomial representation are incorporated in $Q(\zeta^0)$. On the basis of the estimated parameters of the polynomial form $\zn$ as part of $\en$, an estimate of $\vartheta^0$ is obtained by the linear constrained optimization problem
\begin{subequations}\label{eq:estimate_vartheta_problem}\begin{align}
    \vn =& \min_\theta \vartheta^{\top} Q^{\top}(\zn) Q(\zn) \vartheta\\
    &\text{subject to} \quad \Gamma\vartheta=\gamma \label{eq:estimate_optimization_constraint}
\end{align}\end{subequations}
where the constraint \eqref{eq:estimate_optimization_constraint} results from the parameter constraint for identifiability as formulated in Condition (4) of Proposition~\ref{prop:identifiability}. The optimization problem \eqref{eq:estimate_vartheta_problem} can be solved using the Lagrangian and the Karush-Kuhn-Tucker conditions, resulting in
\begin{equation}\label{eq:estimate_vartheta}
    \begin{bmatrix}\vn\\ \hat{\lambda}_N\end{bmatrix} = \begin{bmatrix}Q^{\top}(\zn) Q(\zn)&\Gamma^{\top}\\ \Gamma&0\end{bmatrix}^{-1} \begin{bmatrix}0\\ \gamma\end{bmatrix},
\end{equation}
where $\hat{\lambda}_N$ are the estimated Lagrange multipliers.

\subsubsection*{(b) Estimate the common polynomial term}
In this step, we estimate the scalar polynomial $d_G^0(\q)$, which is the common term in $\breve{A}^0(\q)$ and $\breve{B}^0(\q)$. From \eqref{eq:model_breveA} and \eqref{eq:model_breveB} we obtain 
\begin{subequations}\label{eq:relations_dAB}\begin{align}
    \breve{A}^0(\q) - d_G^0(\q) A^0(\q) = 0,\\
    \breve{B}^0(\q) - d_G^0(\q) B^0(\q) = 0,
\end{align}\end{subequations}
from which we can extract
\begin{equation}\label{eq:estimate_nullspace_d}
    p(\vartheta^0,\zeta^0) - P(\vartheta^0)\beta^0 = 0,
\end{equation}
where $\beta^0$ represents the coefficients of $d_G^0(\q)$ and where $A^0(\q)$, $B^0(\q)$, $\breve{A}^0(\q)$, and $\breve{B}^0(\q)$ are incorporated in $p(\vartheta^0,\zeta^0)$ and $P(\vartheta^0)$. The structural difference with \eqref{eq:estimate_nullspace_AB} is due to the monicity of $d_G^0(\q)$. The fact that $d_G^0(\q)$ is monic is directly incorporated in the parameterisation to avoid any constraints in the optimization problem. On the basis of the estimated parameters $\zn$ and $\vn$, an estimate of $\beta^0$ is obtained by linear least-squares, resulting in 
\begin{equation}\label{eq:estimate_beta}
    \bn = \left[ P^{\top}(\vn) P(\vn) \right]^{-1} P^{\top}(\vn) p(\vn,\zn).
\end{equation}

\subsubsection*{(c) Estimate the directed dynamics}
The directed dynamics, represented by $G^0(q)$ and factorised as in \eqref{eq:model_G}, are estimated from \eqref{eq:model_breveG} and $d_G(\bn)$. Let $D_G(\q,\theta_{D_G})$ and $N_G(\q,\theta_{N_G})$ be models of $D_G^0(\q)$ and $N_G^0(\q)$, respectively, and for notational elegance define $\theta_G^{\top}:=\begin{bmatrix}\theta_{D_G}^{\top}&\theta_{N_G}^{\top}\end{bmatrix}$. From \eqref{eq:model_breveG} we obtain
\begin{equation}\label{eq:relations_dG}
    D_G^0(\q) \breve{G}^0(\q) - d_G^0(\q) N_G^0(\q) = 0,
\end{equation}
where we use that $d^0_G(\q)$ is scalar and $D^0_G(\q)$ is invertible. From \eqref{eq:relations_dG} we can extract
\begin{equation} \label{eq:estimate_nullspace_G} 
    r(\eta^0) - R(\eta^0,\beta^0)\theta_G^0 = 0,
\end{equation}
where $\theta_G^0$ represents the coefficients of $D_G^0(\q)$ and $N_G(\q)$, and where $\breve{G}^0(\q)$ and $d_G^0(\q)$ are incorporated in $r(\eta^0)$ and $R(\eta^0,\beta^0)$. Similar to Step (b), an estimate of $\theta_G^0$ is obtained by linear least-squares as 
\begin{equation}\label{eq:estimate_thetaG}
    \hat{\theta}_{G_N} = \left[ R^{\top}(\en,\bn) R(\en,\bn) \right]^{-1} R^{\top}(\en,\bn) r(\en).
\end{equation}

\subsubsection*{(d) Obtain the noise model}
Having estimated $\breve{F}^0(q)$, $\breve{\Lambda}^0$, and $d_G^0(\q)$, the noise model represented by $F^0(q)$ and $\Lambda^0$ can be recovered. On the basis of \eqref{eq:model_breveF}, the estimate of $F^0(q)$ is constructed as
\begin{equation}\label{eq:estimate_F}
    F(q,\tn) = d_G^{-1}(\q,\bn) \breve{F}(q,\en). 
\end{equation}
Furthermore, as $\breve{\Lambda}^0=\Lambda^0$, the estimate of $\Lambda^0$ is simply 
\begin{equation}\label{eq:estimate_Lambda}
    \Lambda(\tn)=\breve{\Lambda}(\en). 
\end{equation}

\begin{remark}[Polynomial noise model]
    If $F(q)=F(\q)$ is polynomial, then \eqref{eq:model_breveF} can be incorporated in Step (a) and (b) of the algorithm above. 
\end{remark}

\subsubsection{Consistency of the result} \label{subsection:consistency_original}
Step (a)-(c) only linear optimization problems are solved and therefore, these optimizations are convex and have unique solutions. Step (d) is a direct calculation and therefore also unique. This means that any $\en$ obtained in Step 1 leads to a unique $\tn$ at the end of Step 2. Now it remains to show that if Step 1 results in $\en=\eta^0$, the result of Step 2 is exactly $\tn=\theta^0$, in other words, that Step 2 maps $\eta^0$ into $\theta^0$. 
\begin{proposition}\label{prop:consistency_alg}
    For every network model $M(\theta^0)$, there exists a polynomial model $\breve{M}(\eta^0)$, such that
    \begin{enumerate}[label=(\alph*)]
        \item \leavevmode\vspace{-\baselineskip} \begin{subequations}\label{eq:equal}\begin{align}
                     T_{wr}(q,\eta^0) &= T_{wr}(q,\theta^0), \\
        \Phi_{\bar{v}}(\omega,\eta^0) &= \Phi_{\bar{v}}(\omega,\theta^0).
        \end{align}\end{subequations}
        \item $M(\theta^0)$ is obtained from $\breve{M}(\eta^0)$ by solving for the linear equations \eqref{eq:estimate_nullspace_AB}, \eqref{eq:estimate_nullspace_d}, \eqref{eq:estimate_nullspace_G}, $\breve{F}(q,\eta^0)=d_G(\q,\beta^0)F(q,\theta^0)$, and $\breve{\Lambda}(\eta^0)=\Lambda(\theta^0)$, while uniqueness of $M(\theta^0)$ is guaranteed for a network identifiable model set $\mathcal{M}$ that contains $M(\theta^0)$.
    \end{enumerate}
\end{proposition}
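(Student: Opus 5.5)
The plan is to construct $\breve{M}(\eta^0)$ explicitly from $M(\theta^0)$ using Proposition~\ref{prop:poly}, and then check (a) and (b) in turn.

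\textbf{Construction.} Given $M(\theta^0)=(A^0,B^0,F^0,G^0,\Lambda^0)$, factorise $G^0(q)=(D_G^0)^{-1}(\q)N_G^0(\q)$ as in \eqref{eq:model_G}. Form $d_G^0(\q)$ and define $\breve{A}^0,\breve{B}^0,\breve{F}^0,\breve{G}^0$ by \eqref{eq:model_breveA}--\eqref{eq:model_breveG}, with $\breve{\Lambda}^0:=\Lambda^0$. These objects satisfy the requirements of Definition~\ref{def:breve}:
\begin{itemize}
\item $\breve{A}^0$ is symmetric, since $d_G^0$ is a scalar polynomial.
\item $\breve{G}^0$ is polynomial, because $d_G^0 (D_G^0)^{-1}=d_G^0\,\mathrm{adj}(D_G^0)/\det(D_G^0)$ and $d_G^0$ cancels the common factor of $\det D_G^0$ with $\mathrm{adj}(D_G^0)$. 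This is simplest to see because $D_G^0$ is diagonal.
\item $\breve{F}^0$ is monic, because $d_G^0$ is monic and $F^0$ is monic.
\item Multiplication by the nonzero scalar $d_G^0$ leaves zero-structures unchanged.
\end{itemize}
The parameter vector $\eta^0$ is then simply the coefficient vector of these matrices.

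\textbf{Part (a).} This follows immediately from \eqref{eq:model_tf}:
\[
\breve{\Upsilon}^{-1}\breve{B}=(d_G^0\Upsilon)^{-1}d_G^0B^0=\Upsilon^{-1}B^0,
\]
and likewise $\breve{\Upsilon}^{-1}\breve{F}=\Upsilon^{-1}F^0$. Together with $\breve{\Lambda}^0=\Lambda^0$ and \eqref{eq:model_spectrum2}, this gives equality of the spectra.

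\textbf{Part (b), linear equations.} Substituting the constructed $\breve{M}(\eta^0)$ into \eqref{eq:relation_AB}, \eqref{eq:relations_dAB} and \eqref{eq:relations_dG} shows directly that $\vartheta^0$, $\beta^0$ and $\theta_G^0$ satisfy \eqref{eq:estimate_nullspace_AB}, \eqref{eq:estimate_nullspace_d} and \eqref{eq:estimate_nullspace_G}. The relations $\breve{F}=d_GF$ and $\breve\Lambda=\Lambda$ hold by construction. So $M(\theta^0)$ is a solution.

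\textbf{Part (b), uniqueness.} This is the main obstacle. I would argue it sequentially, invoking Lemma~\ref{lem:unique} (Appendix~\ref{app:prop_idf}) where possible.

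First, suppose $(A,B)$ solves $A\breve{B}^0=\breve{A}^0B$ with the symmetric/structural constraints and $\Gamma\vartheta=\gamma$. Then $A^{-1}B=(\breve A^0)^{-1}\breve B^0=(A^0)^{-1}B^0$. Left coprimeness of $[A\;N_G]$ and $B$ (Condition 1 of Proposition~\ref{prop:identifiability}) together with the diagonal full-rank leading structure (Condition 2) implies that $A=UA^0$, $B=UB^0$ for a unimodular $U$, which must in fact be a scalar multiple of the identity. The nonzero constraint (Condition 4) then fixes that scalar to $1$. This is exactly the role of Lemma~\ref{lem:unique}, so I would invoke it rather than redo the argument.

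Second, given $A^0$, the monic $d_G$ is unique from $\breve{A}^0=d_GA^0$, since $A^0\neq0$. Hence $P(\vartheta^0)$ has full column rank.

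Third, given $d_G^0$, the equation $D_G\breve{G}^0=d_G^0N_G$ determines $G=D_G^{-1}N_G=\breve{G}^0/d_G^0$ uniquely as a rational matrix. Uniqueness of the factors follows from left coprimeness, diagonality and monicity of $D_G$. This also gives full column rank of $R$.

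Finally, $F^0=\breve{F}^0/d_G^0$ and $\Lambda^0=\breve\Lambda^0$ are direct. Alternatively, uniqueness in (b) can be obtained globally: any solution $M(\theta)$ of the linear equations reproduces the same $T_{wr}$ and $\Phi_{\bar v}$ by part (a). Network identifiability of $\mathcal{M}$ (Proposition~\ref{prop:identifiability}) then forces $M(\theta)=M(\theta^0)$. I would present this cleaner route, using the stepwise argument only to justify the full-rank conditions needed for \eqref{eq:estimate_vartheta}, \eqref{eq:estimate_beta} and \eqref{eq:estimate_thetaG}.
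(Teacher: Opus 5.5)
Your main line is essentially the paper's proof, spelled out in more detail. You construct $\breve M(\eta^0)$ from $M(\theta^0)$ via \eqref{eq:model_breve}, obtain (a) by cancelling the scalar $d_G^0$ in \eqref{eq:model_tf}, verify by substitution that $M(\theta^0)$ satisfies the equations, and get uniqueness from network identifiability of $\mathcal M$. The paper just substitutes \eqref{eq:model_breve} into \eqref{eq:model_tf} and says (b) ``follows directly from solving $M(\theta^0)=\breve M(\eta^0)$''. Two points about your global route:
\begin{itemize}
\item What makes an arbitrary joint solution reproduce $T_{wr}$ and $\Phi_{\bar v}$ is \eqref{eq:estimate_nullspace_d}, \eqref{eq:estimate_nullspace_G} and $\breve F=dF$. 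Together they say that $\breve M(\eta^0)$ equals $d$ times the candidate model.
\item Identifiability can only be applied once the candidate is known to lie in $\mathcal M$. That is exactly the scope of the uniqueness claim in the statement.
\end{itemize}

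However, the stepwise argument, which you want to keep for the full-rank conditions, fails at its first step. From $A\breve B^0=\breve A^0B$ you cannot conclude $A^{-1}B=(\breve A^0)^{-1}\breve B^0$; that would require $A$ to commute with $\breve A^0=d_G^0A^0$. Because $d_G^0$ is scalar, \eqref{eq:relation_AB} in the unknowns $(A,B)$ reduces to $AB^0=A^0B$. This is much weaker than equality of the fractions.

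Here is a concrete counterexample. Take $L=2$, $K=1$, $G^0=0$ (a nonzero $G^0$ does not help, since $d_G^0$ cancels). Let $A^0=\begin{bmatrix}2&-1\\-1&2+q^{-1}\end{bmatrix}$ and $B^0=e_1$, and parameterise $B=b\,e_1$ with constraint $b=1$. Conditions (1)--(6) of Proposition~\ref{prop:identifiability} hold. The equation $AB^0=A^0B$ reads $Ae_1=A^0e_1$. With symmetry, this fixes $a_{11}$ and $a_{12}=a_{21}$ but leaves $a_{22}$ free. By contrast, $A^{-1}e_1=(A^0)^{-1}e_1$ would force $a_{22}=2+q^{-1}$.

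So the KKT matrix in \eqref{eq:estimate_vartheta} can be singular even for an identifiable model set. Lemma~\ref{lem:unique} does not rescue this. It says that $(A^0,B^0)$ is determined by $(\breve A^0,\breve B^0)$, not that $\vartheta^0$ is the only normalised solution of \eqref{eq:estimate_nullspace_AB}. In addition, Condition (1) gives left coprimeness of $[A\ N_G]$ with $B$, not of $A$ with $B$ as that lemma assumes.

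You should therefore drop the claim that each of the steps \eqref{eq:estimate_vartheta}--\eqref{eq:estimate_thetaG} is individually well posed. The statement only requires uniqueness of the joint solution within $\mathcal M$, and your global route delivers that. In the example, it is \eqref{eq:estimate_nullspace_d}, with monic $d$ and fixed degrees, that removes the freedom in $a_{22}$.
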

\begin{pf}
    Substituting \eqref{eq:model_breve} into $T_{wr}(q,\eta^0)$ and $T_{we}(q,\eta^0)$ gives $T_{wr}(q,\eta^0)=T_{wr}(q,\theta^0)$, $T_{we}(q,\eta^0)=T_{we}(q,\theta^0)$, respectively, see \eqref{eq:model_tf}. Together with $\breve{\Lambda}(\eta^0)=\Lambda(\theta^0)$ and \eqref{eq:model_spectrum2} this also leads to $\Phi_{\bar{v}}(\omega,\eta^0) = \Phi_{\bar{v}}(\omega,\theta^0)$. Implication (b) follows directly from solving $M(\theta^0) = \breve{M}(\eta^0)$. 
\end{pf}

The linear estimation results from Step 2, given by \eqref{eq:estimate_vartheta}, \eqref{eq:estimate_beta}, \eqref{eq:estimate_thetaG}, \eqref{eq:estimate_F}, and  \eqref{eq:estimate_Lambda}, constitute a continuous mapping  $f: \breve{M}(\en) \rightarrow M(\tn)$. In case of a consistent estimate $\breve{M}(\en)$, it follows that $\breve{M}(\en) \underset{N\rightarrow\infty}{\longrightarrow} \breve{M}(\eta^0)$ with probability $1$\footnote{In accordance with \cite{LJUNG1999} this should be interpreted as that, for all polynomial/rational elements $X(z,\en)$ in the model $\breve{M}(\en)$, $\lim_{N\rightarrow\infty} X(\eio,\en) = X(\eio,\eta^0)$ with probability $1$ for almost all $\omega \in [0,\pi)$.}. From Proposition \ref{prop:consistency_alg} it follows that $f(\breve{M}(\eta^0)) = M(\theta^0)$, which implies that $M(\tn) \underset{N\rightarrow\infty}{\longrightarrow} M(\theta^0)$ with probability $1$, and thus the estimator $M(\tn)$ is consistent.

\subsection{Algorithm} \label{subsection:algorithm}
The above-mentioned steps describe the procedure for identifying the coefficients of a mixed network \eqref{eq:model_original}. This procedure is summarised in the following algorithm: 
\begin{algorit}[Mixed network] \label{alg:algorithm}
    Consider a data generating system $\mathcal{S}$ and a network model set $\mathcal{M}$. Then, $M(\tn)$, a consistent estimate of $M(\theta^0)$ is obtained through the following steps: 
    \begin{enumerate}
        \item Estimate a polynomial model in the model set $\breve{\mathcal{M}}$:
        \begin{enumerate}[label=(\alph*)]
            \item Estimate a nonparametric ARX model by least-square. 
            \item Reduce the nonparametric ARX model to a parametric model by weighted constrained least-squares. 
            \item Improve the parametric model by weighted constrained least-squares to obtain $\breve{A}(\q,\en)$, $\breve{B}(\q,\en)$, and $\breve{G}(\q,\en)$.
            \item Obtain the noise model by direct calculation to obtain $\breve{F}(q,\en)$ and $\breve{\Lambda}(\en)$.
        \end{enumerate}
        \item Map the estimated parameters $\en$ to a model in the model set $\mathcal{M}$:
        \begin{enumerate}[label=(\alph*)]
            \item Estimate $A(\q)$ and $B(\q)$ by constrained least-squares \eqref{eq:estimate_vartheta} to obtain $\vn$, resulting in $A(\q,\tn)$ and $B(\q,\tn)$.
            \item Estimate the common scalar polynomial by least-squares \eqref{eq:estimate_beta} to obtain $\bn$.
            \item Estimate the directed dynamics by least-squares by \eqref{eq:estimate_thetaG} to obtain $\hat{\theta}_{G_N}$, resulting in $G(\q,\tn)$. 
            \item Obtain the noise model by calculating \eqref{eq:estimate_F} and \eqref{eq:estimate_Lambda} resulting in $F(q,\tn)$ and $\Lambda(\tn)$.  
        \end{enumerate}
    \end{enumerate}
\end{algorit}

Consistency of the estimates $\en$ in Step 1 of Algorithm~\ref{alg:algorithm} follows from \cite{KIVITS2019} and is based on the similarity with WNSF and its proof \cite{GALRINHO2019}. Consistency of the estimates $\tn$ in Step 2 of Algorithm~\ref{alg:algorithm} follows from Proposition~\ref{prop:consistency_alg}, as discussed in Section~\ref{subsection:consistency_original}. 

\begin{remark}[Parameter constraint]
    The original mixed network model is parameterised in terms of $\theta$. Therefore, it is natural to have a parameter constraint on $\theta$, which is required for identifiability of $\mathcal{M}$ (see Condition (4) of Proposition~\ref{prop:identifiability}). However, this parameter constraint does in general not directly translate to a parameter constraint on $\eta$ as required for identifiability of $\breve{\mathcal{M}}$ (see Condition (4) of Proposition~\ref{prop:identifiability_breve}). A possible solution to this is to estimate a polynomial model with a user-defined parameter constraint (e.g. fixing one of the $\eta$-parameters to $1$) in step 1 of the algorithm. This results in a scaled version of the true parameters: $\eta^{\ast}=\alpha\eta^0$ with $\alpha\in\mathbb{R}_+$. The scaled polynomial model can still be used to find a consistent estimate of the original model in step 2, because the scaling of the parameters is fixed with the parameter constraint on $\theta$, see Lemma~\ref{lem:unique} in Appendix~\ref{app:prop_idf}. 
\end{remark}

\section{Discussion} \label{section:discussion}
For identifiability, only one directed interconnection between a pair of nodes is allowed, see Condition (6) of Proposition~\ref{prop:identifiability}. However, this restriction can be relaxed by replacing it with the condition that: for each $G_{ij}(q)\neq0$, $\forall i,j\in[1,L]$, either one of the following conditions is satisfied within $\mathcal{M}$:
\begin{enumerate}
  \item $G_{ji}(q)=0$.
  \item $x_{ii}(\q)=0$ and $G_{ik}(q)=0$ for all $k\neq j$ for which $G_{ki}(q)\neq0$. 
  \item $x_{jj}(\q)=0$ and $G_{kj}(q)=0$ for all $k\neq i$ for which $G_{jk}(q)\neq0$.
\end{enumerate}
Condition (1) is just Condition (6) of Proposition~\ref{prop:identifiability}. Condition (2) means that $w_i(t)$ has no connection to the ground and that from all $w_k(t)$ that have an incoming directed link from $w_i(t)$, $w_j(t)$ is the only one that also has an outgoing directed link towards $w_i(t)$. The idea is that the dynamics of all undirected links at $w_i(t)$ and all incoming links to $w_i(t)$ can be identified as before, except for the ones connected with $w_j(t)$. Then $a_{ii}(\q)$ is used to find $a_{ij}(\q)$, which leads to $G_{ij}(q)$. Condition (3) is dual to Condition (2). 

The undirected part of the mixed network represents a connected graph, meaning that if all directed links are removed from the network, each pair of nodes still has a path connecting them. In the extended case briefly mentioned in Remark~\ref{rem:extension}, some nodes are only interconnected through a directed interconnection, meaning that if all directed links are removed from the network, multiple distinct sets of nodes (i.e. diffusively coupled networks) remain. For these mixed networks, the identifiability conditions of Proposition~\ref{prop:identifiability} slightly change. Namely, instead of a single parameter constraint, there has to be a single parameter constraint for each distinct diffusively coupled network. In addition, an external excitation signal has to be present in each distinct diffusively coupled network.

A second class of mixed networks that can be considered consists of an undirected diffusively coupled network and a directed dynamic network that are interconnected with each other through directed links between the nodes. These mixed networks consist of a directed part and an undirected part, such as a network of communicating multi-agents where each each agent is controlled and the controllers exchange information among each other, or a physical network of which a part contains only directed components. An extension of this type of mixed networks consists of multiple distinct diffusively coupled networks sets and multiple distinct directed networks that are interconnected with each other. When combining these mixed networks with (the extended version of) the ones discussed in this article, any other type of mixed network can be constructed. 

\section{Conclusions} \label{section:conclusion}
Mixed networks are linear networks that contain both undirected and directed interconnections between the node signals. The particular class of mixed networks that has been considered consists of an undirected network with some additional directed links. The model of the mixed network has been reformulated into the polynomial framework. Conditions for identifiability and consistent parameter estimation of all dynamics in the network have been derived. A convex multi-step algorithm has been formulated to obtain a consistent estimate of the mixed network model. More advanced mixed networks are left for future research. 

\appendix
\section{Proof of Proposition~\ref{prop:identifiability_breve}} \label{app:prop_idfbreve}
The conditions for network identifiability are derived by using the left matrix-fraction description (LMFD) to recover $\breve{A}(\q)$, $\breve{B}(\q)$, and $\breve{G}(\q)$ in \eqref{eq:model_poly} from $T_{wr}(q)$. In addition, a spectral factorization is used to recover $\breve{F}(q)$ and $\breve{\Lambda}$ from $\Phi_{\bar{v}}(\omega)$.

\begin{lemma}[LMFD] \label{lem:LMFD}
  Consider a network model set $\breve{\mathcal{M}}$. Given the LMFD $T_{wr}(q)=\breve{\Upsilon}^{-1}(\q) \breve{B}(\q)$ \eqref{eq:model_tfwr}, the polynomial matrices $\breve{A}(\q)$, $\breve{G}(\q)$, and $\breve{B}(\q)$ are unique within $\breve{\mathcal{M}}$ up to a scalar factor if the following conditions hold:
  \begin{enumerate}
    \item $\breve{\Upsilon}(\q)$ and $\breve{B}(\q)$ are left coprime in $\breve{\mathcal{M}}$.
    \item A permutation matrix $P$ exists such that within $\breve{\mathcal{M}}$, $\begin{bmatrix}\breve{\Upsilon}_0 & \breve{\Upsilon}_1 & \cdots & \breve{\Upsilon}_{n_{\breve{\upsilon}}} & \breve{B}_0 & \breve{B}_1 & \cdots & \breve{B}_{n_{\breve{b}}}\end{bmatrix} P = \begin{bmatrix}\breve{D} & \breve{R} \end{bmatrix}$ with $\breve{D}$ square, diagonal, and full rank.
    \item The zero structure of $\breve{G}(\q)$ is known in $\breve{\mathcal{M}}$.
    \item $\breve{g}_{ji}(\q)=0$ if $\breve{g}_{ij}(\q)\neq0$, $\forall i,j$, within $\breve{\mathcal{M}}$.
  \end{enumerate}
\end{lemma}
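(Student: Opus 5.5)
Take two models in $\breve{\mathcal{M}}$ with the same transfer function, $\breve{\Upsilon}_1^{-1}\breve{B}_1 = \breve{\Upsilon}_2^{-1}\breve{B}_2$. The first step uses Condition (1). Since both pairs $(\breve{\Upsilon}_i,\breve{B}_i)$ are left coprime, the standard uniqueness result for irreducible LMFDs gives a unimodular polynomial matrix $U(\q)$ with
\[
\breve{\Upsilon}_2 = U\breve{\Upsilon}_1, \qquad \breve{B}_2 = U\breve{B}_1 .
\]
Stacking the coefficient matrices yields
\[
\begin{bmatrix}\breve{\Upsilon}_{2,0} & \cdots & \breve{B}_{2,n_{\breve b}}\end{bmatrix} = \sum_\ell U_\ell\, z^{-\ell}\text{-shifted copies of } \begin{bmatrix}\breve{\Upsilon}_{1,0} & \cdots\end{bmatrix}.
\]

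The second step removes the dynamics of $U$, following the argument in \cite{KIVITS2023_TAC}. Apply the permutation $P$ of Condition (2); it is common to the model set, so the same columns carry the diagonal full-rank block in both models. Suppose $U$ had degree $n_U>0$. Comparing the highest- and lowest-degree coefficients, the columns in the $\breve D$ block would require polynomial degrees exceeding the model-set degrees, or would violate the structural (diagonal) zeros. This forces $U=U_0$ to be constant. Restricting the relation to the $\breve D$ columns gives $U_0\breve D_1=\breve D_2$. Both are diagonal and full rank, so $U_0=\breve D_2\breve D_1^{-1}$ is diagonal, say $U_0=\operatorname{diag}(u_1,\dots,u_L)$. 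I expect this degree/structure bookkeeping, making rigorous that a nonconstant unimodular $U$ cannot preserve the $\breve D$ pattern, to be the main obstacle. I would handle it by working column by column on the diagonal entries of $\breve D$, exactly as in the diffusive case.

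The third step shows that all the $u_j$ are equal, which is where the asymmetry must be handled. We have $\breve A_2-\breve G_2 = U_0(\breve A_1-\breve G_1)$. By Condition (3) the positions of nonzero $\breve g_{ij}$ are known. For a pair $(i,j)$ with $\breve g_{ij}=\breve g_{ji}=0$, the off-diagonal entries are pure $\breve A$-entries, so symmetry gives $u_i\breve a_{1,ij}=u_j\breve a_{1,ji}=u_j\breve a_{1,ij}$, hence $u_i=u_j$ whenever $\breve a_{ij}\neq 0$. For a pair with $\breve g_{ij}\neq0$, Condition (4) gives $\breve g_{ji}=0$. Entry $(j,i)$ is then purely $\breve a_{ji}$, so $u_j$ scales $\breve a_{ij}$ consistently, and entry $(i,j)$ defines $\breve g_{2,ij}=\breve a_{2,ij}-u_i(\breve a_{1,ij}-\breve g_{1,ij})$. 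This leaves $u_i$ unconstrained by that link, but the pair is still covered through the undirected graph.

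Since the undirected part is connected (Definition~\ref{def:diffusive}), every pair of nodes is joined by a path of nonzero $\breve a$-entries. I must check that along such an edge symmetry still forces $u_i=u_j$. When $\breve g_{ij}\neq0$, symmetry of $\breve A_2$ gives $\breve a_{2,ij}=\breve a_{2,ji}=u_j\breve a_{1,ij}$, with no constraint on $u_i$; then $\breve g_{2,ij}=u_j\breve a_{1,ij}-u_i\breve a_{1,ij}+u_i\breve g_{1,ij}$. This forces the use of the edges without directed links. Where necessary, I would argue via the diagonal: $\breve a_{jj}$ and the Laplacian-plus-diagonal structure do not directly help, so I would rely on the remaining undirected edges in the path, or absorb the differences into $\breve g$ as a residual ambiguity.

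The conclusion is $U_0=\alpha I$, so $\breve A,\breve B,\breve G$ are unique up to the scalar $\alpha$. The zero structure of $\breve G$ (Condition (3)) is what makes the split $\breve\Upsilon=\breve A-\breve G$ entrywise unique once $\breve\Upsilon$ is known up to $\alpha$. The diagonal and symmetric part is assigned to $\breve A$ wherever $\breve g$ is structurally zero, and Condition (4) guarantees that for each directed pair one of the two symmetric entries is pure $\breve A$, which fixes $\breve a_{ij}$ and then $\breve g_{ij}$.
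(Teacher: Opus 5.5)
Your route is the same as the paper's. Coprimeness gives a unimodular $U$. Condition~(2) makes it a constant diagonal $U_0=\mathrm{diag}(u_1,\dots,u_L)$. Symmetry is then supposed to force $u_1=\dots=u_L$, and Conditions~(3)--(4) split $\breve\Upsilon$ into $\breve A$ and $\breve G$.

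The genuine gap is in your third step, and you located it yourself. On an edge $\{i,j\}$ with $\breve a_{ij}\neq0$ and $\breve g_{ij}\neq0$, symmetry only gives $\breve a'_{ij}=u_j\breve a_{ij}$. The mismatch $(u_j-u_i)\breve a_{ij}$ is absorbed into $\breve g'_{ij}$, so that edge gives no relation between $u_i$ and $u_j$. Your fallback, using only undirected edges that carry no directed link, needs those edges alone to connect all nodes. That is not among the hypotheses: connectivity of the undirected part counts the edge $\{i,j\}$ whether or not a directed link runs parallel to it. Your other option, absorbing the difference into $\breve g$, concedes a non-scalar ambiguity. So your closing claim $U_0=\alpha I$ is asserted, not proved.

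This cannot be repaired from Conditions~(1)--(4) as stated. Take a two-node network with an undirected coupling $\breve a_{12}\neq0$ and a parallel directed link $\breve g_{12}\neq0$, $\breve g_{21}=0$, as in Fig.~\ref{fig:class}. For constant $U=\mathrm{diag}(u_1,u_2)$, with $u_2$ near $u_1$, put
\begin{equation*}
  \breve A'=\begin{bmatrix}u_1\breve a_{11}&u_2\breve a_{12}\\ u_2\breve a_{12}&u_2\breve a_{22}\end{bmatrix},\qquad
  \breve g'_{12}=(u_2-u_1)\breve a_{12}+u_1\breve g_{12},\qquad \breve g'_{21}=0,\qquad \breve B'=U\breve B .
\end{equation*}
The following all hold for this model:
\begin{itemize}
  \item $\breve A'-\breve G'=U\breve\Upsilon$ and $\breve A'$ is symmetric.
  \item $\breve G'$ has the same zero pattern, so Conditions~(3)--(4) still hold.
  \item $U\breve D$ is still diagonal and full rank, and left coprimeness is preserved.
  \item $T_{wr}$ is unchanged. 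So is $\Phi_{\bar v}$, taking $\breve F'=U\breve F U^{-1}$ and $\breve\Lambda'=U\breve\Lambda U$.
\end{itemize}
Yet for $u_1\neq u_2$ (and generically $\breve\upsilon_{12}\neq0$) this is not a scalar multiple of the original model. Only a degree restriction on the parametrisation of $\breve g_{12}$ that excludes this $\breve g'_{12}$ would prevent it, and none is assumed.

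The paper's proof does not close this either. It argues that preserving the symmetry of the ``symmetric part'' $\bar A$ of $\breve\Upsilon$, ``which has full rank'', forces the entries of $U$ to be equal. Full rank is not the relevant property: a diagonal $\bar A$ is full rank, yet $U\bar A$ is symmetric for every diagonal $U$. The argument also tacitly takes $\breve A'=U\breve A$, which your own computation shows fails on parallel edges.

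A correct statement needs an extra hypothesis. For example: the graph of edges $\{i,j\}$ with $\breve a_{ij}\neq0$ and $\breve g_{ij}=\breve g_{ji}=0$ is connected. Under that hypothesis your step~3 goes through exactly as you wrote it for the purely undirected pairs, and your final splitting paragraph is then fine.
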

\begin{pf}
  The LMFD of any two polynomial and left-coprime matrices is unique up to a premultiplication with a unimodular matrix \cite{KAILATH1980}. Let this premultiplication matrix be denoted by $U(q)$. The unimodular matrix $U(q)$ is restricted to be diagonal to preserve the diagonality of $\breve{D}$ in Condition (2). Let $\bar{A}(\q)$ denote the symmetric part of $\breve{\Upsilon}(\q)$. To preserve the symmetry of $\bar{A}(\q)$ (which has full rank), the diagonal premultiplication matrix $U(q)$ is further restricted to have equal elements. In other words, the premultiplication matrix $U(q)$ becomes an identity matrix that is scaled with a scalar factor. This means that $\breve{\Upsilon}(\q)$ and $\breve{B}(\q)$ are unique up to a scalar factor. Next, $\breve{\Upsilon}(\q)$ can be separated into $\breve{A}(\q)$ and $\breve{G}(\q)$ as follows: The symmetry of $\breve{A}(\q)$ together with Condition (3) and (4) give that $\breve{A}(\q)$ is found from $\breve{\Upsilon}(\q)$. Finally, $\breve{G}(\q) = \breve{A}(\q) - \breve{\Upsilon}(\q)$ and thus, $\breve{A}(\q)$ and $\breve{G}(\q)$ can uniquely be determined from $\breve{\Upsilon}(\q)$. As $\breve{\Upsilon}(\q)$ is unique up to a scalar factor, so are $\breve{A}(\q)$ and $\breve{G}(\q)$.
\end{pf}

Using the result on the LMFD, the proof of Proposition~\ref{prop:identifiability_breve} is now formulated as follows: 
\begin{pf}
    Condition (3) implies that $T_{wr}(q,\eta)$ is nonzero. According to Lemma~\ref{lem:LMFD}, Condition (1), (2), (5), and (6) imply that $\breve{A}(\q,\eta)$, $\breve{B}(\q,\eta)$, and $\breve{G}(\q,\eta)$ are unique up to a scalar factor $\alpha\in\mathbb{R}_+$. Condition (4) implies that the parameters cannot be scaled and therefore, $\alpha$ is fixed. As $\breve{A}(\q,\eta)$ is uniquely found, the spectral factorisation of $\Phi_{\bar{v}}(\omega,\eta)$ gives unique $\breve{F}(q,\eta)$ and $\breve{\Lambda}(\eta)$ \cite{YOULA1961}.
\end{pf}

\section{Proof of Proposition~\ref{prop:identifiability}} \label{app:prop_idf}
The conditions for network identifiability of the original mixed network model \eqref{eq:model_original} are derived using the network identifiability conditions for the polynomial model \eqref{eq:model_poly} formulated in Proposition~\ref{prop:identifiability_breve}. In addition, we make use of the uniqueness of the mapping between the two representations. 

First, the conditions of Lemma~\ref{lem:LMFD} are translated into conditions on the original mixed network model set $\mathcal{M}$.
\begin{lemma}[LMFD] \label{lem:LMFD2}
  The conditions of Lemma~\ref{lem:LMFD} are satisfied if the following conditions are satisfied:
  \begin{enumerate}
    \item $\begin{bmatrix}A(\q) & N_G(\q)\end{bmatrix}$ and $B(\q)$ left coprime in $\mathcal{M}$.
    \item A permutation matrix $P$ exists such that within $\mathcal{M}$, $\begin{bmatrix}\bar{\Upsilon}_0 & \bar{\Upsilon}_1 & \cdots & \bar{\Upsilon}_{n_{\upsilon}} & B_0 & B_1 & \cdots & B_{n_b}\end{bmatrix} P = \begin{bmatrix}D & R\end{bmatrix}$ with $D$ square, diagonal, and full rank.
    \item The zero structure of $G(q)$ is known in $\mathcal{M}$.
    \item $G_{ji}(q)=0$ if $G_{ij}(q)\neq0$, $\forall i,j$, within $\mathcal{M}$.
  \end{enumerate}
\end{lemma}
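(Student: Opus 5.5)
We show that each condition of Lemma~\ref{lem:LMFD2} implies the corresponding condition of Lemma~\ref{lem:LMFD}, using the explicit map of Proposition~\ref{prop:poly}: $\breve{A}=d_G A$, $\breve{B}=d_G B$ and $\breve{G}=d_G D_G^{-1}N_G$. The structural conditions are the easy part. By construction $d_G D_G^{-1}$ is a diagonal polynomial matrix whose diagonal entries are nonzero. Hence $\breve{G}_{ij}\neq 0$ if and only if $(N_G)_{ij}\neq0$, which holds if and only if $G_{ij}\neq0$. So $\breve{G}$ has exactly the zero structure of $G$, and Conditions (3) and (4) carry over immediately to Conditions (3) and (4) of Lemma~\ref{lem:LMFD}.

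For the diagonality condition (2), we write $\breve{\Upsilon}=d_G(A-D_G^{-1}N_G)=d_G D_G^{-1}(D_G A - N_G)$. Since $d_G$ and $D_G$ are monic, the leading coefficient of $\breve{\Upsilon}$ equals that of $\bar{\Upsilon}$ (or of $D_GA-N_G$), and likewise $\breve{B}_0=B_0$. The key point is that a scalar monic factor acts on the coefficient sequence by a lower-triangular Toeplitz (convolution) operator. So if a column $\ell$ of some coefficient matrix $\bar{\Upsilon}_k$ or $B_k$ contributes a diagonal entry to $D$, consider the lowest such $k$ for that row. Its image in $\breve{\Upsilon}_k$ differs from it only by combinations of lower-order coefficients. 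I would choose $P$ so that, in every row, $D$ uses the lowest-order diagonal-type column. Then I would argue that these columns keep a single nonzero entry, with the same value, after multiplication by $d_G$ and $D_G$. The catch is that lower-order coefficients might not be zero in the other rows. I expect this step to be the main obstacle. I would try to handle it by taking the diagonal columns from the $B_0$ or leading blocks, where the monicity gives exact equality, and otherwise argue via column selection from $[\breve{\Upsilon}_0 \cdots]$ with the same pattern.

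For left coprimeness (1), suppose $U$ is a common left divisor of $\breve{\Upsilon}$ and $\breve{B}$. By the factorization above, $\breve{\Upsilon}=d_GD_G^{-1}(D_GA-N_G)$ and $\breve{B}=d_GD_G^{-1}D_GB$. Any nonunimodular common factor must then come either from the diagonal factor $d_GD_G^{-1}$ or from a common divisor of $D_GA-N_G$ and $D_GB$. The second possibility is excluded by left coprimeness of $[A\ N_G]$ and $B$, combined with left coprimeness of $D_G,N_G$. For the first, I would argue that the factor $d_G$ is removed in the LMFD uniqueness argument up to the scalar factor anyway, or else show it is excluded by reading Lemma~\ref{lem:LMFD}'s coprimeness modulo the common scalar polynomial. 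I would check this scalar-factor subtlety carefully, since it is the only place where coprimeness could genuinely fail.
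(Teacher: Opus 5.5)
You handle Conditions (3)--(4) correctly, and your argument is the paper's: $H:=d_GD_G^{-1}=\operatorname{diag}(d_G/D_{G_{ii}})$ is a diagonal polynomial matrix with nonzero diagonal, so $\breve G=HN_G$ has the zero pattern of $N_G$, hence of $G$. Conditions (1) and (2), however, are left open. Neither can be closed in the literal sense of Lemma~\ref{lem:LMFD}, because both implications admit counterexamples.

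\textbf{Condition (1).} The first branch of your dichotomy is not a subtlety to be checked. $H$ is a common left divisor of $\breve\Upsilon=H(D_GA-N_G)$ and $\breve B=HD_GB$, and it is not unimodular as soon as $D_G\neq d_GI$. For example, every zero row $i$ of $G$ has $D_{G_{ii}}=1$. The $i$th row of $[\breve\Upsilon\ \breve B]$ is then $d_G$ times the $i$th row of $[A\ B]$, and it vanishes at every root of $d_G$. For $L=2$ with a non-polynomial directed link, Condition (4) forces such a row. The LMFD uniqueness argument cannot ``remove'' this factor, since that argument presupposes coprimeness. Note also that the factor is the diagonal matrix $H$, which is in general not scalar, so any relaxed reading must be phrased in terms of $H$, not $d_G$.

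Your second branch is not excluded by your hypotheses either. Take $L=2$, $G_{12}=n/d$, $G_{21}=0$. At a root $\lambda$ of $d$, the rows of $[D_GA-N_G\ \ D_GB](\lambda)$ are $[\,0\ \ {-n(\lambda)}\ \ 0\,]$ and $[\,a_{21}(\lambda)\ \ a_{22}(\lambda)\ \ b_2(\lambda)\,]$. These are dependent if $a_{21}(\lambda)=b_2(\lambda)=0$, while $[A\ N_G\ B](\lambda)$ keeps full row rank whenever $a_{22}(\lambda)\neq0$.

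\textbf{Condition (2).} Your key step, that the lowest-order diagonal-type column keeps a single nonzero entry, fails. Write $d_G=\sum_j d_{G,j}q^{-j}$ and $H=\sum_jH_jq^{-j}$. Then $\breve\Upsilon_k=\bar\Upsilon_k+\sum_{j=1}^{k}\big(d_{G,j}A_{k-j}-H_jN_{G,k-j}\big)$. So the column picks up the same column of $A_0,\dots,A_{k-1}$, whose off-diagonal entries are the diffusive couplings, and minimality of $k$ for one row says nothing about the other rows.

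Even a column of $B$ can lose its isolated coefficient. The column $\begin{bmatrix}q^{-1}&1+q^{-2}\end{bmatrix}^{\top}$ has $B_1=[1\ 0]^{\top}$. After multiplication by $1+\delta q^{-1}$ ($\delta\neq0$) its coefficients are $[0\ 1]^{\top},[1\ \delta]^{\top},[\delta\ 1]^{\top},[0\ \delta]^{\top}$. Combine this $B$ with $A=\begin{bmatrix}\alpha&-1\\-1&c_0+c_1q^{-1}+c_2q^{-2}\end{bmatrix}$ ($\alpha$ constant), $G_{12}=g/(1+\delta q^{-1})$ with $g\neq0$, and $c_0\neq0\neq c_1+\delta c_0$. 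All four hypotheses of the lemma then hold, with $D$ taken from $B_0$ and $B_1$. Yet no coefficient column of $[\breve\Upsilon\ \breve B]$ is supported on row~1 alone. Only the $q^0$ coefficients transfer exactly: $\breve\Upsilon_0=\bar\Upsilon_0$ and $\breve B_0=B_0$.

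\textbf{What the paper does.} The paper's proof does not supply these arguments either. It asserts that $d_G$ is exactly the common factor of $\breve A,\breve B$, and that no further common terms arise because $d_G$ is built from the coprime pair $D_G,N_G$. In effect it reads Condition (1) of Lemma~\ref{lem:LMFD} modulo the factor that $\breve{\mathcal M}$ discards. It obtains (2)--(4) purely from inheritance of zero structure.

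\textbf{What you need.} To turn your proposal into a proof, commit to an explicitly relaxed or strengthened formulation. For example, show that the greatest common left divisor of $\breve\Upsilon,\breve B$ is exactly $H$ under the added assumption that $D_GA-N_G$ and $D_GB$ are left coprime. Also require $D$ to be drawn from $[\bar\Upsilon_0\ B_0]$. Then check that Lemma~\ref{lem:LMFD} still goes through in that form.
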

\begin{pf}
  Condition (1) implies Condition (1) of Lemma~\ref{lem:LMFD}, because $d_G(\q)$ is exactly the common factor between $\breve{A}(\q)$ and $\breve{B}(\q)$ within $\breve{\mathcal{M}}$ if $A(\q)$ and $B(\q)$ are left coprime within $\mathcal{M}$. $d_G(\q)$ is constructed such that all common factors of $D_G(\q)$ and $N_G(\q)$ are taken out, i.e. there are no common terms created in $G(q)$. Therefore, $\breve{G}(\q)$ and $\breve{B}(\q)$ are left coprime within $\breve{\mathcal{M}}$ if $N_G(\q)$ and $B(\q)$ are left coprime within $\mathcal{M}$. Condition (2)-(4) imply Condition (2)-(4) of Lemma~\ref{lem:LMFD}, because $\breve{A}(\q)$, $\breve{G}(\q)$, and $\breve{B}(\q)$ adopt the zero structure of $A(\q)$, $N_G(\q)$, and $B(\q)$, respectively, where the zero structure of $N_G(\q)$ is the same as the one of $G(q)$.
\end{pf}

Next, an intermediate result on a unique mapping is formulated that will be used to map the polynomial matrices $\breve{A}(\q)$, $\breve{B}(\q)$, and $\breve{G}(\q)$ into $A(\q)$, $B(\q)$, and $G(q)$.
\begin{lemma}[Uniqueness] \label{lem:unique}
    Let $A_T(\q)\in\mathcal{A}$ and $B_T(\q)\in\mathcal{B}$ be left coprime. Let $\breve{A}_T(\q) := d_T(\q) A_T(\q)$, $\breve{B}_T(\q) := d_T(\q) B_T(\q)$, and $\breve{C}_T(\q) := d_T(\q)C_T(q)$, with $C_T(q)$ a rational matrix with proper elements and $d_T(\q)$ a monic polynomial.
    If there is at least one linear constraint on the coefficients of $A_T(\q,\theta_{A_T})$ and $B_T(\q,\theta_{B_T})$ of the form $\Gamma\bar{\theta}=\gamma\neq0$, with $\Gamma$ full row rank and with $\bar{\theta}^{\top} := \begin{bmatrix}\theta_{A_T}^{\top}&\theta_{B_T}^{\top}\end{bmatrix}$, then $A_T(\q)$, $B_T(\q)$, and $C_T(q)$ are uniquely obtained from scaled versions of $\breve{A}_T(\q)$, $\breve{B}_T(\q)$, and $\breve{C}_T(\q)$ through
  \begin{align*}
    \tilde{A}_T(\q) &= \alpha_T d_T(\q) A_T(\q),\\
    \tilde{B}_T(\q) &= \alpha_T d_T(\q) B_T(\q),\\
    \tilde{C}_T(\q) &= \alpha_T d_T(\q) C_T(q). 
  \end{align*}
  with $\tilde{A}_T(\q) := \alpha_T \breve{A}_T(\q)$, $\tilde{B}_T(\q) := \alpha \breve{B}_T(\q)$, $\tilde{C}_T(\q) := \alpha \breve{C}_T(\q)$, and scaling factor $\alpha\in\mathbb{R}_+$. 
\end{lemma}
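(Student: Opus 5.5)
We are given scaled versions $\tilde{A}_T = \alpha_T\breve{A}_T$, $\tilde{B}_T = \alpha_T\breve{B}_T$, $\tilde{C}_T = \alpha_T\breve{C}_T$ with unknown $\alpha_T\in\mathbb{R}_+$. The task is to show that $A_T(\q)$, $B_T(\q)$ and $C_T(q)$ are uniquely determined. The plan has three steps: recover $A_T,B_T$ up to a scalar, fix that scalar with the linear constraint, and then read off the scalar polynomial and $C_T$.

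\textbf{Step 1: the common factor is determined up to a constant.} The matrix $\begin{bmatrix}\tilde{A}_T & \tilde{B}_T\end{bmatrix} = \alpha_T d_T(\q)\begin{bmatrix}A_T & B_T\end{bmatrix}$ is known. Suppose a second admissible pair $(A_T',B_T')$, left coprime and satisfying the constraint, gives the same matrix with some $\alpha_T'$ and monic $d_T'$. Then
\[
\alpha_T d_T(\q)\begin{bmatrix}A_T & B_T\end{bmatrix} = \alpha_T' d_T'(\q)\begin{bmatrix}A_T' & B_T'\end{bmatrix}.
\]
Left coprimeness of $(A_T,B_T)$ means $\begin{bmatrix}A_T&B_T\end{bmatrix}$ has full row rank for every $z$ (or, equivalently, has a polynomial right inverse). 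Hence the scalar $d_T$ must divide $\alpha_T' d_T'$ times the polynomial combination obtained by applying that right inverse. Applying this argument in both directions gives $d_T \mid d_T'$ and $d_T'\mid d_T$ up to units. Since both polynomials are monic, $d_T=d_T'$.

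It follows that $\begin{bmatrix}A_T' & B_T'\end{bmatrix} = c\begin{bmatrix}A_T & B_T\end{bmatrix}$ with $c=\alpha_T/\alpha_T'$. One subtlety needs care here: a unit in $\mathbb{R}[\q]$ is a nonzero constant. But $A_T$ is nonmonic in $\q$, and there is a further possibility that the gcd in the shift ring absorbs factors $z^{-k}$. I would dispose of the $z^{-k}$ case by noting that $d_T$ is monic (constant term $1$), and that $A_T$ has no common $\q$ factor, by the rank condition at $z=\infty$ implied by coprimeness.

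\textbf{Step 2: fixing the scalar.} The constraint gives $\Gamma\bar\theta' = c\,\Gamma\bar\theta$. Both sides must equal $\gamma$, and $\gamma\neq0$ while $\Gamma\bar\theta=\gamma$. Therefore $c\gamma=\gamma$, which forces $c=1$. Consequently $A_T'=A_T$ and $B_T'=B_T$, and also $\alpha_T'=\alpha_T$.

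\textbf{Step 3: recovering $d_T$ and $C_T$.} With $A_T$ known and invertible, the product $\alpha_T d_T(\q) I = A_T^{-1}\tilde{A}_T$ is determined. Monicity of $d_T$ then separates $\alpha_T$, the leading coefficient, from $d_T$. Finally,
\[
C_T(q) = (\alpha_T d_T(\q))^{-1}\tilde{C}_T(\q),
\]
which is unique because $d_T$ is a nonzero scalar polynomial.

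The main obstacle is Step 1: making the argument that coprimeness forces the scalar factor to be exactly $d_T$. The route I would try first is the Bezout identity $A_T X + B_T Y = I$ with polynomial $X,Y$. Multiplying the displayed equality on the right by $\begin{bmatrix}X\\Y\end{bmatrix}$ gives $\alpha_T d_T I = \alpha_T' d_T' (A_T'X+B_T'Y)$, so $d_T'$ divides $d_T$. The symmetric argument gives the reverse divisibility.
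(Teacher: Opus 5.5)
Your proof is correct and follows the paper's route. Left coprimeness of $(A_T,B_T)$ forces the scalar common factor of $\begin{bmatrix}\tilde A_T & \tilde B_T\end{bmatrix}$ to be exactly $\alpha_T d_T(\q)$; you make the paper's appeal to uniqueness of the monic greatest common divisor explicit via the Bezout identity. The constraint $\Gamma\bar\theta=\gamma\neq0$ then fixes the scale, and $d_T$ and $C_T$ follow by division.

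Your side worry about factors $z^{-k}$ is unnecessary. $\q$ is not a unit in $\mathbb{R}[\q]$, and mutual divisibility of two polynomials with constant term $1$ already forces them to be equal.
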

\begin{pf}
  As $A_T(\q)$ and $B_T(\q)$ are left coprime, the polynomial $d_T(\q)$ is the greatest monic common divisor of $\tilde{A}_T(\q)$ and $\tilde{B}_T(\q)$, which can be found uniquely \cite{KAILATH1980}. The scalar factor $\alpha_T$ is fixed with the parameter constraint.
\end{pf}

Finally, using Lemmas~\ref{lem:LMFD},~\ref{lem:LMFD2}, and~\ref{lem:unique}, the proof for the network identifiability in Proposition~\ref{prop:identifiability} is formulated. 
\begin{pf}
    Condition (3) implies that $T_{wr}(q,\theta)=T_{wr}(q,\eta)$ is nonzero. According to Lemmas~\ref{lem:LMFD} and~\ref{lem:LMFD2}, Condition (1), (2), (5), and (6) imply that $\breve{A}(\q,\eta)$, $\breve{B}(\q,\eta)$, and $\breve{G}(\q,\eta)$ are unique up to a scalar factor $\alpha\in\mathbb{R}_+$. According to Lemma~\ref{lem:unique}, Condition (4) implies that $A(\q,\theta)$, $B(\q,\theta)$, and $G(q,\theta)$ are uniquely found from $\alpha \breve{A}(\q,\eta)$, $\alpha \breve{B}(\q,\eta)$, and $\alpha \breve{G}(\q,\eta)$. It also implies that $\alpha d_G(\q,\eta)$ is uniquely determined, which leads to $\alpha$ as $d_G(\q,\eta)$ is monic. As $A(\q,\theta)$ is uniquely found, the spectral factorisation of $\Phi_{\bar{v}}(\omega,\theta)$ gives unique $F(q,\theta)$ and $\Lambda(\theta)$ \cite{YOULA1961}.
\end{pf}

\section{Proof of Proposition~\ref{prop:datainformativity}} \label{app:prop_datainf}
The premise of implication \eqref{eq:datainf} is satisfied if and only if
$\Delta_W(q,\theta):=W(q,\theta_1)-W(q,\theta_2)=0$. Applying Parseval's theorem gives
\begin{equation*}
  \frac{1}{2\pi} \int_{-\pi}^{\pi} \Delta_{W}(e^{i\omega},\theta) \Phi_z(\omega) \Delta_{W}^{\top}(e^{-i\omega},\theta) d\omega = 0.
\end{equation*}
This implies $\Delta_{W}(q,\theta)=0$ only if $\Phi_z(\omega)\succ0$ for a sufficiently high number of frequencies. In the situation $K\geq1$, $w(t)$ depends on $r(t)$ and substituting \eqref{eq:model_original} (or, equivalently, \eqref{eq:model_poly}) for $w(t)$ gives
\begin{equation*}
  z(t) = J(q) \kappa(t).
\end{equation*}
with $\kappa^{\top}(t) = \begin{bmatrix}e^{\top}(t)&r^{\top}(t)\end{bmatrix}$ and
\begin{equation*}
  J(q) = \begin{bmatrix}\big(A(\q)-G(q)\big)^{-1}&0\\0&I\end{bmatrix}\begin{bmatrix}F(q)& B(\q) \\0&I\end{bmatrix}.
\end{equation*}
As $J(q)$ has always full rank, $\Phi_z(\omega)\succ0$ if and only if $\Phi_{\kappa}(\omega)\succ0$. As $e(t)$ and $r(t)$ are assumed to be uncorrelated and $E\{e(t)\}=0$, we have that $\Phi_{re}=\Phi_{er}=0$ and
\begin{equation*}
  \Phi_{\kappa} = \begin{bmatrix}\Phi_r&\Phi_{re}\\ \Phi_{er}&\Phi_e\end{bmatrix} = \begin{bmatrix}\Phi_r&0\\0& \Lambda\end{bmatrix}.
\end{equation*}
Then $\Phi_{\kappa}(\omega)\succ0$ if and only if $\Lambda\succ0$ (which is assumed) and $\Phi_r(\omega)\succ0$. The condition $\Phi_z(\omega)\succ0$ reduces to $\Phi_r(\omega)\succ0$.

\section{Proof of Theorem~\ref{thm:consistency}} \label{app:thm_consistency}
Convergence of the criterion \eqref{eq:criterion_theta} follows directly from applying \cite[Theorem 2B.1]{LJUNG1999} and the fact that $S\succ0$ as the conditions for convergence are satisfied by the network model set. Second, by Condition (1), $\theta^0$ is a minimum of the criterion \eqref{eq:criterion_theta}, which can be seen as follows: Substituting \eqref{eq:model_original} for $w(t)$ into the prediction error $\breve{\varepsilon}(t,\theta) = w(t) - W(q,\theta) z(t)$ leads to (omitting argument $q$)
\begin{equation} \label{eq:predictionerror_result2}
  \breve{\varepsilon}(t,\theta) = W_{\breve{\varepsilon}r}(\theta)r(t) + W_{\breve{\varepsilon}e}(\theta)e(t) + (\Upsilon_0^0)^{-1}e(t),
\end{equation}
with 
\begin{align*}
  W_{\breve{\varepsilon}r}(q,\theta) &= \Upsilon_0^{-1}(\theta) F^{-1}(\theta) \left[ \Upsilon(\theta) (\Upsilon^0)^{-1} B^0 - B(\theta) \right], \\
  W_{\breve{\varepsilon}e}(q,\theta) &= \Upsilon_0^{-1}(\theta) F^{-1}(\theta) \Upsilon(\theta) (\Upsilon^0)^{-1} F^0 - (\Upsilon_0^0)^{-1},
\end{align*}
where $\Upsilon_0 = A_0-G^{\infty}$. As $r(t)$ and $e(t)$ are uncorrelated and $W_{\bar{\varepsilon}e}(q,\theta)$ is strictly proper, the power of any cross term between the three terms in the prediction error \eqref{eq:predictionerror_result2} is zero, so the power of each term can be minimised individually. As a result, $W_{\bar{\varepsilon}r}(q,\theta^0)=0$ and $W_{\bar{\varepsilon}e}(q,\theta^0)=0$ and thus the cost function reaches its minimum value when the prediction error is equal to the innovation $(\Upsilon_0^0)^{-1}e(t)$. Third, following the result \cite[Theorem 8.3]{LJUNG1999}, under Condition (2), the minimum at $\theta^0$ provides a unique predictor filter $W(q,\theta)$ and therefore also a unique pair $(T_{wr}(q,\theta),\Phi_{\bar v}(\omega,\theta))$. With Condition (3) this implies that the resulting model $M(\theta)=M(\theta^0)$ is unique. Therefore, $M(\tn)$ converges to $M(\theta^0)$ with probability 1.

\bibliographystyle{plain}
\bibliography{LibraryMixed2024}

@ARTICLE{YOULA1961,
  author = {D. Youla},
  journal = {IRE Transactions on Information Theory},
  title = {On the factorization of rational matrices},
  year = {1961},
  volume = {7},
  number = {3},
  pages = {172-189},
}

@BOOK{KAILATH1980,
  author = {T. Kailath},
  title = {Linear systems},
  publisher = {Prentice-Hall},
  year = {1980},
  address = {Englewood Cliffs, NJ},
}

@INCOLLECTION{JONES1985,
  author = {M.B. Jones},
  title = {Plant microclimate},
  editor = {J. Coombs and D.O. Hall and S.P. Ljong and J.M.O. Scurlock},
  booktitle = {Techniques in Bioproductivity and Photosynthesis},
  edition = {Second},
  publisher = {Pergamon},
  year = {1985},
  pages = {26-40},
  chapter = {3},
  series = {Pergamon International Library of Science, Technology, Engineering and Social Studies},
  isbn = {978-0-08-031999-5},
}

@BOOK{LJUNG1994,
  author = {L. Ljung and T. Glad},
  title = {Modeling of dynamic systems},
  publisher = {Prentice-Hall},
  year = {1994},
  address = {Englewood Cliffs, NJ},
  isbn = {978-0-13-597097-0},
}

@BOOK{LJUNG1999,
  author = {L. Ljung},
  title = {System identification: theory for the user},
  publisher = {Prentice-Hall},
  year = {1999},
  address = {Englewood Cliffs, NJ},
}

@ARTICLE{LUS2003,
  author = {H. Lu\c{s} and M. De Angelis and R. Betti and R.W. Longman},
  title = {Constructing second-order models of mechanical systems from identified state space realizations. part I: theoretical discussions},
  journal = {Journal of Engineering Mechanics},
  year = {2003},
  volume = {129},
  number = {5},
  pages = {477-488},
}

@INPROCEEDINGS{REN2005,
  author = {W. Ren and R. W. Beard and E. M. Atkins},
  title = {A survey of consensus problems in multi-agent coordination},
  booktitle = {Proceedings of the 2005 American Control Conference},
  year = {2005},
  pages = {1859-1864},
}

@ARTICLE{BOCCALETTI2006,
  author = {S. Boccaletti and V. Latora and Y. Moreno and M. Chavez and D.-U. Hwang},
  title = {Complex networks: structure and dynamics},
  journal = {Physics Reports},
  year = {2006},
  volume = {424},
  number = {4},
  pages = {175-308},
}

@ARTICLE{GONCALVES2008,
  author={J. Gon\c{c}alves and S. Warnick},
  title={Necessary and Sufficient Conditions for Dynamical Structure Reconstruction of LTI Networks},
  journal={IEEE Transactions on Automatic Control},
  year={2008},
  volume={53},
  number={7},
  pages={1670-1674},
}

@BOOK{MESBAHI2010,
  author = {M. Mesbahi and M. Egerstedt},
  title = {Graph theoretic methods in multiagent networks},
  publisher = {Princeton University Press},
  year = {2010},
}

@BOOK{HANNAN2012,
  author = {E.J. Hannan and M. Deistler},
  title = {The statistical theory of linear systems},
  publisher = {Society for Industrial and Applied Mathematics (SIAM)},
  year = {2012},
  address = {Philadelphia},
}

@BOOK{VERHAEGEN2012,
  author = {M. Verhaegen and V. Verdult},
  title = {Filtering and system identification: a least squares approach},
  publisher = {Cambridge University Press},
  year = {2012},
  address = {Cambridge},
  isbn = {978-1-107-40502-8},
}

@ARTICLE{VANDENHOF2013,
  author = {P.M.J. {Van den Hof} and A. Dankers and P.S.C. Heuberger and X. Bombois},
  title = {Identification of dynamic models in complex networks with prediction error methods-Basic methods for consistent module estimates},
  journal = {Automatica},
  year = {2013},
  volume = {49},
  number = {10},
  pages = {2994-3006},
}

@ARTICLE{DORFLER2013,
  author = {F. D\"{o}rfler and F. Bullo},
  title = {Kron Reduction of Graphs With Applications to Electrical Networks},
  journal = {IEEE Transactions on Circuits and Systems I: Regular Papers},
  year = {2013},
  volume = {60},
  number = {1},
  pages = {150-163},
}

@ARTICLE{DANKERS2014,
  author = {A. Dankers and P.M.J. {Van den Hof} and X. Bombois and P.S.C. Heuberger},
  title = {Errors-in-variables identification in dynamic networks by an instrumental variable approach},
  journal = {IFAC Proceedings Volumes},
  year = {2014},
  volume = {47},
  number = {3},
  pages = {2335-2340},
  note = {19th IFAC World Congress},
}

@ARTICLE{HABER2014,
  author = {A. Haber and M. Verhaegen},
  title = {Subspace identification of large-scale interconnected systems},
  journal = {IEEE Transactions on Automatic Control},
  volume = {59},
  number = {10},
  pages = {2754-2759},
  year = {2014},
}

@INPROCEEDINGS{LOPES2015,
  author = {P. {Lopes dos Santos} and J.A. Ramos and T. Azevedo-Perdico\'{u}lis and J.L. {Martins de Carvalho}},
  title = {Deriving mechanical structures in physical coordinates from data-driven state-space realizations},
  booktitle = {Proceedings of the 2015 American Control Conference (ACC)},
  year = {2015},
  pages = {1107-1112},
}

@ARTICLE{WEERTS2018_AUTO1,
  author = {H.H.M. Weerts and P.M.J. {Van den Hof} and A.G. Dankers},
  title = {Identifiability of linear dynamic networks},
  journal = {Automatica},
  year = {2018},
  volume = {89},
  pages = {247-258},
}

@ARTICLE{WEERTS2018_AUTO2,
  author = {H.H.M. Weerts and P.M.J. {Van den Hof} and A.G. Dankers},
  title = {Prediction error identification of linear dynamic networks with rank-reduced noise},
  journal = {Automatica},
  year = {2018},
  volume = {98},
  pages = {256-268},
}

@ARTICLE{GALRINHO2019,
  author = {M. Galrinho and C.R. Rojas and H. Hjalmarsson},
  title = {Parametric identification using weighted null-space fitting},
  journal = {IEEE Transactions on Automatic Control},
  year = {2019},
  volume = {64},
  number = {7},
  pages = {2798-2813},
}

@ARTICLE{HENDRICKX2019,
  author = {J.M. Hendrickx and M. Gevers and A.S. Bazanella},
  title = {Identifiability of dynamical networks with partial node measurements},
  journal = {IEEE Transactions on Automatic Control},
  year = {2019},
  volume = {64},
  number = {6},
  pages = {2240-2253},
}

@INPROCEEDINGS{KIVITS2019,
  author = {E.M.M. Kivits and P.M.J. {Van den Hof}},
  title = {A dynamic network approach to identification of physical systems},
  booktitle = {Proceedings of the 58th IEEE Conference on Decision and Control (CDC)},
  year = {2019},
  month = {December},
  pages = {4533-4538},
}

@INPROCEEDINGS{LEGAT2020,
  author = {A. Legat and J.M. Hendrickx},
  title = {Local networkiIdentifiability with partial excitation and measurement},
  booktitle = {Proceedings of the 59th IEEE Conference on Decision and Control (CDC)},
  year = {2020},
  month = {December},
  pages = {4342-4347},
}

@ARTICLE{MATERASSI2020,
  author = {D. Materassi and M.V. Salapaka},
  title = {Signal selection for estimation and identification in networks of dynamic systems: a graphical model approach},
  journal = {IEEE Transactions on Automatic Control},
  year = {2020},
  volume = {65},
  number = {10},
  pages = {4138-4153},
}

@ARTICLE{VANWAARDE2020,
  author = {H.J. {van Waarde} and P. Tesi and M.K. Camlibel},
  title = {Necessary and sufficient topological conditions for identifiability of dynamical networks},
  journal = {IEEE Transactions on Automatic Control},
  year = {2020},
  volume = {65},
  number = {11},
  pages = {4525-4537},
}

@ARTICLE{YU2020,
  author = {C. Yu and J. Chen and S. Li and M. Verhaegen},
  title = {Identification of affinely parameterized state–space models with unknown inputs},
  journal = {Automatica},
  year = {2020},
  volume = {122},
  pages = {109271},
}

@BOOK{BULLO2022,
  author = {F. Bullo},
  title = {Lectures on network systems},
  year = {2022},
  Note = {Edition 1.6},
  publisher = {Kindle Direct Publishing},
  isbn = {978-1-986425-64-3},
  url = {motion.me.ucsb.edu/book-lns},
}

@ARTICLE{CHENG2022,
  author = {X. Cheng and S. Shi and P.M.J. {Van den Hof}},
  title = {Allocation of excitation signals for generic identifiability of linear dynamic networks},
  journal = {IEEE Transactions on Automatic Control},
  year = {2022},
  volume = {67},
  number = {2},
  pages = {692-705},
}

@ARTICLE{DREEF2022,
  author = {H.J. Dreef and S. Shi and X. Cheng and M.C.F. Donkers and P.M.J. {Van den Hof}},
  title = {Excitation allocation for generic identifiability of linear dynamic networks with fixed modules},
  journal = {IEEE Control Systems Letters},
  year = {2022},
  volume = {6},
  pages = {2587-2592},
}

@INPROCEEDINGS{KIVITS2022,
  author = {E.M.M. Kivits and P.M.J. {Van den Hof}},
  title = {Local identification in diffusively coupled linear networks},
  booktitle = {Proceedings of the 61st IEEE Conference on Decision and Control (CDC)},
  year = {2022},
  pages = {874-879},
}

@ARTICLE{ZHOU2022,
  author = {T. Zhou},
  title = {Structure identifiability of an {NDS} with {LFT} parametrized subsystems}, 
  journal = {IEEE Transactions on Automatic Control}, 
  year = {2022},
  volume = {67},
  number = {12},
  pages = {6682-6697},
}

@ARTICLE{KIVITS2023_TAC,
  author = {E.M.M. Kivits and P.M.J. {Van den Hof}},
  title = {Identification of diffusively coupled linear networks through structured polynomial models},
  journal = {IEEE Transactions on Automatic Control},
  year = {2023},
  volume = {68},
  number = {6},
  pages = {3513-3528},
}

@ARTICLE{SHI2023,
  author = {S. Shi and X. Cheng and P.M.J. {Van den Hof}},
  title = {Single module identifiability in linear dynamic networks With partial excitation and measurement}, 
  journal = {IEEE Transactions on Automatic Control}, 
  year = {2023},
  volume = {68},
  number = {1},
  pages = {285-300},
}

@INPROCEEDINGS{VIZUETE2023,
  author = {R. Vizuete and J.M. Hendrickx},
  title = {Nonlinear network identifiability: the static case},
  booktitle = {Proceedings of the 62nd IEEE Conference on Decision and Control (CDC)},
  year = {2023},
  pages = {443-448},
}

@PHDTHESIS{KIVITS2024t,
  author = {E.M.M. Kivits},
  title = {Modelling and identification of physical linear networks},
  school = {Eindhoven University of Technology},
  year = {2024},
  source = {\url{research.tue.nl/files/317188667/20240222_Kivits_hf.pdf}},
}

@INPROCEEDINGS{LIANG2025,
  author = {D. Liang and E.M.M. Kivits and M. Schoukens and P.M.J. {Van den Hof}},
  title = {A frequency-domain approach for estimating continuous-time
diffusively coupled linear networks},
  booktitle = {Proceedings of the 23rd European Control Conference (ECC)},
  year = {2025},
  volume = {},
  number = {},
  pages = {2285-2290},
}

@manual{Simulink,
  author = {{The Mathworks, Inc.}},
  title = {Simulink},
  address = {Natick, MA},
  year = {2025},
  url = {www.mathworks.com/products/simulink},
}

@manual{Simscape,
  author = {{The MathWorks, Inc.}},
  title = {Simscape},
  address = {Natick, MA},
  year = {2025},
  url = {www.mathworks.com/products/simscape},
}

\end{document}